\newcounter{magicrownumbers}
\def\BibTeX{{\rm B\kern-.05em{\sc i\kern-.025em b}\kern-.08em
    T\kern-.1667em\lower.7ex\hbox{E}\kern-.125emX}}
\theoremstyle{plain}
\newtheorem{theorem}{Theorem}
\newtheorem{lemma}{Lemma}
\newtheorem{corollary}{Corollary}
\newtheorem{proposition}{Proposition}
\theoremstyle{definition}
\theoremstyle{remark}
\newtheorem{remark}{Remark}
\newtheorem*{Proof}{Proof}
\begin{document}
\title{New Upper Bounds for Noisy Permutation Channels}
\author{
Lugaoze Feng,
Baoji Wang, 
Guocheng Lv,
Xvnan Li,
Luhua Wang, and
Ye jin
\thanks{Lugaoze Feng, Guocheng Lv and Ye Jin are with the School of Electronics, Peking University, Beijing 100871, China (e-mail: lgzf@stu.pku.edu.cn; lv.guocheng@pku.edu.cn; jinye@pku.edu.cn).}
\thanks{Baoji Wang, Xunan Li, and Luhua Wang are with the National Computer Network Emergency Response Technical Team/Coordination Center of China, Beijing 100029, China (e-mail: bjwang@cert.org.cn; lixunan@cert.org.cn; wlh@cert.org.cn).}}

\maketitle

\begin{abstract}
	The \textit{noisy permutation channel} is a useful abstraction introduced by Makur for point-to-point communication networks and biological storage. While the asymptotic capacity results exist for this model, the characterization of the second-order asymptotics is not available. 
	Therefore, we analyze the converse bounds for the noisy permutation channel in the \textit{finite blocklength} regime. To do this, we present a modified minimax meta-converse for noisy permutation channels by symbol relaxation. To derive the second-order asymptotics of the converse bound, we propose a way to use \textit{divergence covering} in analysis. It enables the observation of the second-order asymptotics and the strong converse via Berry-Esseen type bounds. These two conclusions hold for noisy permutation channels with strictly positive matrices (entry-wise). In addition, we obtain computable bounds for the noisy permutation channel with the binary symmetric channel (BSC), including the original computable converse bound based on the modified minimax meta-converse, the asymptotic expansion derived from our subset covering technique, and the \textit{$\epsilon$-capacity} result. We find that a smaller crossover probability provides a higher upper bound for a fixed finite blocklength, although the $\epsilon$-capacity is agnostic to the BSC parameter. Finally, numerical results show that the normal approximation shows remarkable precision, and our new converse bound is stronger than previous bounds.
\end{abstract}

\begin{IEEEkeywords}
Noisy permutatioin channel, finite blocklength, divergence covering, converse bound.
\end{IEEEkeywords}

\section{Introduction}
	\IEEEPARstart{T}{he} \textit{noisy permutation channel}, formally introduced in \cite{makur_coding_2020}, is a point-to-point communication model in which an input $n$-letter undergoes a discrete memoryless channel (DMC) and a uniform random permutation block. It captures the out-of-order arrival of packets in the communication network. Since the uniform permutation block, the only relevant statistic obtained at the receiver is the empirical distribution. This situation often occurs in applications such as communication networks and biological storage systems.  
	
	Generally speaking, the noisy permutation channel is a suitable model for the \textit{multipath routed network.} Consider a point-to-point communication network, where different packets are transmitted through multiple routes, causing different delays. The noisy permutation channel simulates the impact of paths with different delays arriving at the receiver. For example, data packets undergo rerouting in heavily loaded networks and mobile ad hoc networks. The rate delay tradeoffs are analyzed in \cite{walsh_optimal_2008} and \cite{maclaren_walsh_optimal_2009} but do not consider the noise. The insertions, substitutions, deletions, and erasures of symbols are considered in recent work \cite{kovacevic_subset_2013}, \cite{kovacevic_perfect_2015}, \cite{kovacevic_codes_2018}. Their works focus on the perfect codes and minimum distance codes for the permutation channel. Another major application comes from research on \textit{DNA based storage systems} with very high density \cite{yazdi_dna-based_2015}. In \cite{heckel_fundamental_2017}, the authors studied a storage system where data is encoded by DNA molecules. The short DNA molecules are arranged unordered but not corrupted, and the receiver reads the encoded data by shotgun sequencing. Furthermore, a study on the  \textit{noisy shuffling channel} is similar to the noisy permutation channel \cite{shomorony_capacity_2019}. The DNA codewords pass through DMC and are fragmented, and then the fragments are randomly permuted. Such a model simulates that DNA molecules are corrupted by noise at synthesis, sequencing, and during storage.
	
	The main information-theoretic task in the channel coding theorem is to find the maximum communication rate over $n$ uses of a fixed noisy channel $W$. In previous work, several recent advances have been made to the noisy permutation channel with DMC matrices. The original asymptotic bound is given in \cite{makur_coding_2020,makur_bounds_2020}, containing a lower bound that holds in any case and an upper bound that holds for the strictly positive matrices. However, the upper and lower bounds do not match in the case where the rank of the DMC matrix is not equal to the size of the output alphabet. \cite{tang_capacity_2023} improves this upper bound based on the \textit{divergence covering number} and studies the capacity of some non-strictly positive matrices such as $q$-ary erasure channels and Z-channels. Additionally, the capacity area problem of permutation adder multiple-access channels is given in \cite{lu_permutation_2023}, \cite{lu_permutation_2024}.
	
	There are still important issues that need to be resolved. A result known as the strong converse \cite{wolfowitz_notes_1968} in the classical channel coding theorem shows that the asymptotic converse bound holds for every average error probability $\epsilon \in (0,1)$. However, this property has no direct equivalent in the noisy permutation channel. Furthermore, we note that the code lengths of suitable code in such systems are in the order of thousands or hundreds \cite{maclaren_walsh_optimal_2009}, \cite{heckel_fundamental_2017}, invalidating the asymptotic assumptions in classical information theory. The finer asymptotic characterization of the coding theorem gives backoff from channel capacity at a fixed \textit{finite blocklength} \cite{strassen_asymptotic_nodate}, \cite{wang_simple_2009}, \cite{hayashi_information_2009}, \cite{polyanskiy_channel_2010}, \cite{polyanskiy_2010_thesis}, \cite{polyanskiy_saddle_2013}, \cite{tomamichel_tight_2013}. The argument based on binary hypothesis testing allows us to flexibly choose an auxiliary distribution to derive the converse bound for general channels \cite{polyanskiy_channel_2010}, which is termed the minimax meta-converse. For example, the DMC asymptotic upper bound and the third-order asymptotics can be recovered from the meta-converse by Topsoe identity \cite{topsoe1967information} and Berry-Esseen type bounds \cite{tomamichel_tight_2013}, respectively. However, since each message is mapped to an unknown probability distribution, this technique is difficult to apply to noisy permutation channels.

	To fill the mentioned gaps, our main contributions are as follows:
	\begin{enumerate} 
		\item[$\bullet$] First, we develop a modified minimax meta-converse for noisy permutation channels. Using the idea of symbol relaxation \cite{tomamichel_tight_2013} to restrict our computation to the likelihood ratio between the product distribution and the Bayesian distribution, this bound applies to the average error probability. 
		
		\item[$\bullet$] Second, we derive the upper bound for the second-order asymptotics for noisy permutation channels. The key element necessary to complete this proof is our ways based on divergence covering. These ways allow the computation to focus only on the \textit{divergence covering center} of a vector from the transition probability distribution and the divergence covering number of a simplex. Additionally, our asymptotic expansion directly implies the strong converse result for noisy permutation channels.
		
		 \item[$\bullet$] Third, for the noisy permutation channel with the binary symmetric channel (BSC), i.e., the BSC permutation channel, we derive a computable converse bound. We construct a grid-like structure of the set of divergence covering centers using the method in \cite{Jennifer_2022}. This result is obtained using our modified minimax meta-converse on this grid. Although the computational complexity is high, it directly gives the asymptotic behavior of the upper bound. 
		 
		 \item[$\bullet$] Finally, we present the upper bound of the second-order asymptotics and the \textit{$\epsilon$-capacity} for the BSC permutation channel. We estimate the variance by modifying the construction scheme of the grid to cover a subset of the simplex. This result implies a unique property of BSC permutation channels: Although the asymptotic bound is independent of the crossover probability, a smaller BSC crossover probability has a higher upper bound with a fixed finite blocklength $n$. Furthermore, numerical evaluations show that our normal approximation is tight, and our new converse bound is stronger than previous bounds.
	\end{enumerate}
	
	This paper is organized as follows. Section \ref{Section 2} introduces the system model and preliminaries. Section \ref{Section 3} presents our main conclusions: apply minimax converse to the noisy permutation channel and analysis of second-order asymptotics. In section \ref{Section 4}, we use our result to compute the converse bounds of the BSC permutation channel. In Section \ref{Section 5}, we give the numerical evaluation of the BSC permutation channel. We conclude the work and discuss open problems in Section \ref{Section 6}.
	
\subsection{Notation}
	Let $[n] = \{ 1,...,n \}$, $\mathbb{N} = \{1,2,...\}$, and $\mathbb{Z}_+ = \{ 0,1,... \}$. Denote by $1\{ \cdot \}$ the indicator function. Random variables (e.g., $X$) and their realizations (e.g., $x$) are denoted by upper and lower case letters, respectively. Finite sets (e.g., $\mathcal{X}$) are denoted by calligraphic letters. We write $X \sim P_X$ to indicate that the random variable $X$ follows the distribution $P_X$. Let $\mathsf{bin}(n,p)$ denote a binomial distribution with $n$ trials and probability $p$. Denote by $X^n = (X_1,...,X_n)$ and $x^n = (x_1,...,x_n)$ the random vector and its realization in the $n$-th Cartesian product $\mathcal{X}^n$, respectively. Given a finite set $\mathcal{X}$, the cardinality of $\mathcal{X}$ is denoted by $|\mathcal{X}|$. Given a matrix $A$, We use notation $\mathsf{rank}(A)$ to represent the rank of matrix $A$.
	The  probability and expectation are denoted by $\mathbb{P}[\cdot]$, $\mathbb{E}[\cdot]$, where the underlying probability measures will be clear from context. The cumulative distribution function of the standard normal distribution is defined as 
	\begin{equation}
		\Phi(x) = \int_{-\infty}^{x} \frac{1}{\sqrt{2 \pi }}e^{-\frac{t^2}{2}}dt,
	\end{equation}
	and $\Phi^{-1}(\cdot)$ is its inverse function.
	
	A simplex on $\mathbb{R}^K$ is a set of points 
		\begin{equation}
			\Delta_{K-1} \stackrel{\triangle}{=} \left\{ (s_1,s_2,...,s_K)\in \mathbb{R}^k, \ s_k \ge 0, \ \sum_{k=1}^{K} s_k = 1 \right\}.
		\end{equation}
	We emphasize its subsets "n-types" as
			\begin{align}
				\mathcal{P}_n = \bigg\{ P \in \Delta_{K}: \ &P =  \left(\frac{a_1}{n},...,\frac{a_{K}}{n}\right) \nonumber\\
			&	\textbf{{\rm where}} \ a_1,...,a_{K}\in \mathbb{Z}_+ \bigg\}.
			\end{align}	
	For a sequence $x^n = \{ x_1,...,x_n \} \in \mathcal{X}^n$, we denote by $\hat{P}_{x^n} \in \mathcal{P}_n$ the empirical distribution of sequence $x^n$, i.e., $\hat{P}_{x^n}(x) = \frac{1}{n} \sum_{i=1}^{n} 1\{ x_i=x \}$. 		

\section{System Model and Preliminaries} \label{Section 2}
\subsection{Noisy Permutation Channels}
			
	In the noisy permutation channel, the sequence $X^n$ goes through the discrete memoryless channel $W$ to produce $Y^n$. Then, the codeword $Y^n$ passes through a uniformly random permutation part to generate sequence $Y_{\rm Perm}^n$. The sequence $Y^n$ and $Y_{\rm Perm}^n$ take value in $\mathcal{Y}$. We chain all objects together into the following Markov chain:
	\begin{equation}
				X^n \rightarrow Y^n \rightarrow Y_{\rm Perm}^n.
	\end{equation}
	
	According to \cite[Lemma 2]{makur_coding_2020}, sending codeword through a random permutation part and then passing through DMC is equivalent to passing through DMC and then applying random permutation. Therefore, in this paper, we describe the Markov chain of the noisy permutation channel as 
	\begin{equation}
		Z^n \rightarrow X^n \rightarrow Y^n,
	\end{equation}
	where $Z^n$ is the original codeword, and $X^n$ is a random permutation of $Z^n$. For sequences $Z^n$ and $X^n$, we have $Z_i,X_i \in \mathcal{X}$. We use $P_{X^n|Z^n}$ to indicate the random permutation part. To describe this random permutation part, first denote a bijection function as $\lambda: \{ 1,...,n \} \mapsto \{ 1,...,n \}$, drawn randomly and uniformly from the symmetric group $\mathcal{S}_n$ over $\{ 1,...,n \}$. Then we have $X_{\lambda(i)} = Z_i$ for all $i \in \{ 1,...,n \}$. 
	$X^n$ passes through the probability kernel $W$ of DMC and becomes the sequence $Y^n$, where each $Y_i \in \mathcal{Y}$.
	
	 In addition, the DMC can be described as a $|\mathcal{X}| \times |\mathcal{Y}|$ matrix $W$, where $W(y|x)$ denotes the probability that the output $y \in \mathcal{Y}$ occurs given input $x \in \mathcal{X}$.
	We refer to $W $ as strictly positive if all the transition probabilities in $W $ are greater than $0$.
	
\subsection{Preliminaries}
	In the binary hypothesis testing problem, we have two distributions $P$ and $Q$ on a space $\mathcal{X}$. Consider a random variable $V$ which can select one of two distributions $P$ and $Q $ through a random transformation $P_{V|X}: \mathcal{X}\mapsto {0,1}$. Let $V=1$ denote that the test chooses $P$, then we have
	\begin{equation}
		\alpha = P[V=1] = \sum_{x \in \mathcal{X}} P_{V|X}(1|x) P(x),
	\end{equation}
	and the type-II error probability of the test is defined by\footnote{Since the context is well defined, we use probability measures $P[\cdot]$ and $Q[\cdot]$ to represent $\alpha$ and $\beta$ respectively.}	
	\begin{equation}
			\beta = Q[V=1] =\sum_{x \in \mathcal{X}} P_{V|X}(1|x) Q(x).
	\end{equation}
	For a given $\alpha \in (0,1)$, the optimal performance of $\beta$ is 
	\begin{equation} \label{eq:beta_function}
		\beta_\alpha(P ,Q) = \min \sum_{x \in \mathcal{X}} P_{V|X}(1|x)Q(x),
	\end{equation}
	where the minimum is over all probability distributions $P_{V|X}$ satisfying
	\begin{equation}
			P_{V|X}:\sum_{x \in \mathcal{X}}P_{V|X}(1|x) P(x) \ge \alpha.
	\end{equation}	
	The achievability of (\ref{eq:beta_function}) is guaranteed by the Neyman-Pearson lemma.
	 $\beta_{\alpha}(P,Q)$ has the following converse bound \cite{polyanskiy_2023_book}:
	\begin{equation}\label{eq:strong_converse_of_NP_region}
		\alpha - \gamma \beta \le P \left[\log \frac{dP}{dQ} > \log \gamma\right],
	\end{equation}
	where $\gamma > 0$ is arbitrary.
		
	Next, we consider using an $\varepsilon$-net (see \cite{shiryayev_selected_1993} and \cite{yang_information-theoretic_1999}) to cover a probability simplex under Kullback-Leibler (KL) divergence as distance. For distributions $P$ and $Q$ on a discrete alphabet $\mathcal{X}$ , KL divergence is defined as
	\begin{equation}
		D(P\|Q) = \sum_{x \in \mathcal{X}} P(x) \log \frac{P(x)}{Q(x)}.
	\end{equation}
	We can upper bounds KL divergence with $\chi^2$ divergence \cite{csiszar_context_2006}, i.e.,
	\begin{equation} \label{eq:KL_is_upperbounded_by_Chi}
		D(P\|Q) \le \sum_{x \in \mathcal{X}} \frac{(P(x)-Q(x))^2}{Q(x)} = D_{\chi^2}(P\|Q).
	\end{equation}
	
	Let $r$ be the covering radius. The worst-case divergence covering means that for any $P$, there exists a center $Q$ such that $D(P\|Q) \le r$. Let $r>0$, the divergence covering number is defined as
	\begin{align}
		|\mathcal{N}_{r,K}| = \inf \{  & m: \exists\{ Q_1,...,Q_m \} \nonumber\\ 
		&	{\rm s.t.} \max \limits_{P \in \Delta_{K-1}} \min \limits_{Q_i}D(P\|Q_i) \le r \}, 
	\end{align}
	where $\mathcal{N}_{r,K} = \{ Q_1,...,Q_m \}$ is the set of divergence covering centers. According to \cite{Jennifer_2022}, we can construct the set of covering centers based on (\ref{eq:construst_covering_center_3}). Then, there exists a constant $\tau $ such that for $r_0 = r/\tau$ and $\mathcal{N}_{r,K} = \Lambda(r_0)$, the divergence covering number of the simplex $\Delta_{K-1}$ satisfies
	\begin{equation}\label{eq:covering_number_converse}
		|\mathcal{N}_{r,K}|  \le c^{K-1} \left( \frac{K-1}{r} \right)^{\frac{K-1}{2}},
	\end{equation}
	where $c \le 7$. 
	
\subsection{Codes for Noisy Permutation Channels }

		Refers to \cite{polyanskiy_2023_book}, the considered code $\mathcal{C}$ consists of a message set $\mathcal{M}$, an encoder function $f_n: \mathcal{M} \mapsto \mathcal{X}^n$, and a decoder function $g_n: \mathcal{Y}^n \mapsto \mathcal{M} \cup \{ e \}$. We write $\mathcal{X} = [|\mathcal{X}|]$ for the finite input alphabet and $\mathcal{Y} = [|\mathcal{Y}|]$ for the finite output alphabet. Denote by $|\mathcal{M}|$ the cardinality of a code $\mathcal{C}$. The average error probability of a code is defined as
		\begin{align}
				P_e: & = \mathbb{P} [M \not = \hat{M}] \nonumber \\ 
						&= 1- \frac{1}{|\mathcal{M}|} \sum_{m \in \mathcal{M}} P_{X^n|Z^n}(x^n|f_n(m)) W(g_n^{-1}(m)|x^n)
		\end{align}
		where $M $ is the message random variable drawn uniformly from the message set $\mathcal{M}$ and $\hat{M}$ is the estimate of the received sequence. We illustrate this system in Figure \ref{per1}.
			\begin{figure*}[t]
				\normalsize
				\centering
				\includegraphics[width = 0.9\textwidth]{./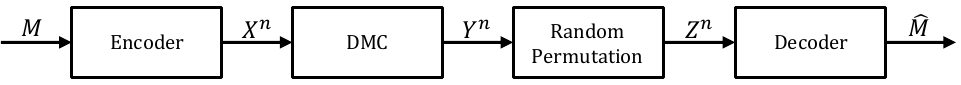}
				\caption{\ Illustration of a communication system with a random permutation followed by a DMC}
				\label{per1}
			\end{figure*}
			
		We define the basic non-asymptotic limit of a channel as $M^*(n,\epsilon) = \max \{M: \exists \ \mathcal{C} \ {\rm such \ that }\ P_e \le \epsilon \}.$
		The rate\footnote{The results in \cite{makur_coding_2020} show that if the capacity is defined in the classical form $\log M^*(n,\epsilon)/n$, the noisy permutation channel will have capacity $0$. In the following section, we will see that $\log n$ can be used to describe the first-order term of noisy permutation channels.} for the encoder-decoder pair $(f_n, g_n)$ is 
			\begin{equation}\label{eq:rate_function}
				R(n,\epsilon) \stackrel{\triangle}{=} \frac{\log M^*(n,\epsilon)}{\log n}.
			\end{equation}
		The $\epsilon$-capacity for noisy permutation channels with $W$ is defined as
		\begin{equation} 
			C_{\epsilon} \stackrel{\triangle}{=} \liminf \limits_{n \rightarrow \infty} \frac{\log M^*(n,\epsilon)}{\log n},
		\end{equation}
		where $\log(\cdot)$ is the binary logarithm (with base 2) throughout this paper. Similar to the classical definition, the strong converse means for every $\epsilon \in (0,1)$, we have
			\begin{equation}
				C_{\epsilon} \le \liminf \limits_{n \rightarrow \infty} \frac{\log M^*(n,\epsilon)}{\log n}.
			\end{equation}
			The weak converse means this bound holds only for $\epsilon \rightarrow 0^+$.
					
		Then, we introduce some useful definitions based on divergence covering. Note that in noisy permutation channels, the codeword is generated by different distributions\footnote{As shown in \cite{makur_coding_2020}, The messages are mapped to different distributions, and the decoder estimates the message using the empirical distribution.}. Denote by $\Theta \subset \Delta_{|\mathcal{X}-1|}$ the set of input distributions. The distribution of $\Theta$ is $P_{\Theta}(\pi)$. 
		
		Let $W_x = W(\cdot|x) \in \Delta_{ \ell}$ be the distribution on $\mathcal{Y}$ if the input is fixed to $x \in \mathcal{X}$, where $\ell = |\mathcal{Y}|-1$ is the dimension of $\Delta_{\ell}$. Similarly, denote by $\pi W: y \rightarrow  \sum_{x \in \mathcal{X}} \pi(x)W(y|x)$ the marginal distribution if $\pi \in \Theta$ is fixed. For a given $x$, consider $Q_{x}^*$ to be the divergence covering center of $W_x$, and we define the conditional divergence $D(W\|Q_{X}^*|\pi) = \sum_{z \in \mathcal{X}} \pi(x) D(W_x\|Q_{x}^*)$. The variance of log-likelihood ratio between $W_x$ and $Q_{x}^*$ is defined as 
		\begin{equation}
			V(W_x\|Q_{x}^*) = \mathbb{E}_{W_x}\left[ \left( \log \frac{W_x}{Q_{x}^*} - D(W_x\|Q_{x}^*) \right)^2 \right].
		\end{equation}
		Moreover, we define the conditional divergence variance is $V(W\|Q_{X}^*|\pi) = \sum_{x \in \mathcal{X}} \pi (x) V(W_x\|Q_{x}^*)$. Let $V(\pi,W) = V(W\|Q_{X}^*|\pi)$, the maximum and minimum variance of the noisy permutation channel based on the divergence covering center is defined as
		\begin{align}
			V_{max} = \max \limits_{\pi \in \Theta} V(\pi, W), \\
			V_{min} = \min \limits_{\pi \in \Theta} V(\pi, W),
		\end{align}
		where $\Theta \subset  \Delta_{|\mathcal{X}-1|}$ is the set of the input probability distributions. The $\epsilon$\textit{-dispersion}\footnote{Here, we set $V_{\epsilon} = V_{max}$ for $\epsilon = 1/2$. Since $\Phi^{-1}(1/2) = 0$, the definition of $V_{1/2}$ is immaterial as far the approximation.} of the noisy permutation channel is
		\begin{equation}
			V_\epsilon = 
				\begin{cases}
					V_{\min} \ & {\mathrm{if}} \  \epsilon < 1/2 \\
					V_{\max} \ & {\mathrm{if}} \  \epsilon \ge 1/2
				\end{cases}.
		\end{equation}
				
		Similarly, we define the third absolute moment of log-likelihood ratio
		\begin{equation}
			T(W_x\|Q_{x}) = \mathbb{E}_{W_x}\left[ \left| \log \frac{W_x}{Q_{x}} - D(W_x\|Q_{x}^*) \right|^3 \right].
		\end{equation}
		
		Consider $x_i \in \mathcal{X}, i \in \{ 1,...,n \}$ and a sequence distribution $Q_{x_i}^* \in \mathcal{N}_r$. Let $Q_{x_i}^*$ be the divergence covering center of $W_{x_i}$. Let $W_{x^n} = \prod_{i=1}^{n} W(\cdot| x_i)$. By the memoryless property of DMC, we have
			\begin{equation}\label{eq:memroyless_property}
				W^n(y^n|x^n) = \prod_{i=1}^{n} W(y_i|x_i).
			\end{equation}

\section{Converse Bounds for Noisy Permutation Channels} \label{Section 3}
	
	In this section, we develop a upper bound for second-order asymptotics for the noisy permutation channel with strictly positive matrices. The proof consists of three parts, each detailed in one of the following sections. In Subsection \ref{Subsection 3_meta}, we derive a modified minimax meta-converse for noisy permutation channels that involves a Bayes mixture density based on divergence covering. In Subsection \ref{Subsection 3_asy}, we prove some useful properties that we need later. Finally, we give our main result in Subsection \ref{Subsection 3_main}.

\subsection{A Modified Minimax Meta-Converse}\label{Subsection 3_meta}
	In this subsection, we present a modified minimax meta-converse \cite[Theorem 27]{polyanskiy_channel_2010} with an arbitrary set of input distributions, and this bound applies to the average error probability. 
	
	First, we introduce a Bayes mixture density. Let $\Delta_{ \ell}^*$ be the convex hull of $\{ \pi W: \pi \in \Theta \}$. For convenience of expression, denote by $\mathcal{N}_r$ the set of divergence covering centers of simplex $\Delta_\ell^*$. Consider constructing the subset of $\mathcal{N}_r$ in the following way:
					\begin{enumerate}
					\item Pick $Q_{x}$ such that $\mathcal{N}_{1} = \{ Q_{x} : Q_{x} \in \mathcal{N}_r, x \in [|\mathcal{X}|]\}$.
					\item Pick $Q_{x}$ such that $\mathcal{N}_{2} = \{ Q_{x} : Q_{x} \in  \mathcal{N}_r\setminus \mathcal{N}_{1}, x \in [|\mathcal{X}|] \}$.
					
					\
					$\cdots$
					\item Pick $Q_{x}$ such that $\mathcal{N}_{E-1} = \{ Q_{x}: Q_{x} \in \mathcal{N}_r \setminus \cup_{e=1}^{E-2}\mathcal{N}_{e}, x \in [|\mathcal{X}|] \}$.
					\item Pick elements from $\cup_{k=1}^{E-1}\mathcal{N}_{k}$ and add them to $\mathcal{N}_r \setminus \cup_{e=1}^{E-1}\mathcal{N}_{e}$, such that the resulting set satisfies $|\mathcal{N}_r^*| = |\mathcal{X}|$. Let $\mathcal{N}_{E} = \{ Q_{x}: Q_{x} \in \mathcal{N}_r^*, x \in [|\mathcal{X}|]\} $.
					\end{enumerate} 
					We can at most get $E = \big \lceil |\mathcal{N}_r| / |\mathcal{X}| \big \rceil$ such subsets. Let $\mathcal{O} = \cup_{e=1}^E \mathcal{N}_e$, we get a Bayes mixture density
					\begin{align}
						\hat{Q}_{Y^n} (y^n) & =   \frac{1}{E} \sum_{\mathcal{N}_e \in \mathcal{O}} \prod_{Q_x \in \mathcal{N}_e} \prod_{i=1}^{n\hat{P}_{x^n}(x)} Q_{x}(y_i),\label{eq:bayes_distrubution}
					\end{align}	
	
	Then, we give our new converse bound as follows:
	\begin{proposition} \label{proposition:meta_converse_of_permutation channel}
			For the noisy permutation channel with strictly positive matrices,  let $W$ is any channel from $\mathcal{X}$ to $\mathcal{Y}$ and $\mathcal{N}_r$ is any set of divergence covering centers, we have
			\begin{equation} \label{eq:meta_converse}
				\log M^*(n,\epsilon) \le \sup \limits_{\pi \in \Theta} \sup \limits_{x^n \in \mathcal{X}^n} - \log \beta_\alpha(W_{x^n},Q_{Y^n}), 
			\end{equation}
			where we set $Q_{Y^n}=\hat{Q}_{Y^n}$.
		\end{proposition}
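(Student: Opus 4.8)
The plan is to instantiate the minimax meta-converse of \cite[Theorem~27]{polyanskiy_channel_2010} for the permutation channel $P_{Y^n|Z^n}$ and then, using the equivalence \cite[Lemma~2]{makur_coding_2020}, the exchangeability it induces, and the memoryless identity \eqref{eq:memroyless_property}, to replace the intractable conditional law $P_{Y^n|Z^n}$ by the product law $W_{x^n}$; this last replacement is the ``symbol relaxation'' of \cite{tomamichel_tight_2013}, and it is what lets the computation be phrased through the likelihood ratio of $W_{x^n}$ against the Bayes mixture $\hat Q_{Y^n}$ of \eqref{eq:bayes_distrubution}. \emph{Step~1 (hypothesis-testing reduction).} Fix a code $(f_n,g_n)$ with $|\mathcal{M}|=M$ and $P_e\le\epsilon$; write $c_m=f_n(m)$, $D_m=g_n^{-1}(m)$, and let $P_{Z^n}$ be uniform on $\{c_m\}$. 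On $\mathcal{X}^n\times\mathcal{Y}^n$ consider the binary test between $P_{Z^n}P_{Y^n|Z^n}$ and $P_{Z^n}\otimes\hat Q_{Y^n}$ that accepts on $\bigcup_m\{c_m\}\times D_m$: its type-I success probability equals $1-P_e\ge1-\epsilon$, which we take as $\alpha$, while its type-II error equals $\tfrac1M\sum_m\hat Q_{Y^n}(D_m)\le\tfrac1M$ because the $D_m$ are disjoint. Hence $\beta_\alpha\!\bigl(P_{Z^n}P_{Y^n|Z^n},\,P_{Z^n}\otimes\hat Q_{Y^n}\bigr)\le1/M$, i.e. $\log M\le-\log\beta_\alpha\!\bigl(P_{Z^n}P_{Y^n|Z^n},\,P_{Z^n}\otimes\hat Q_{Y^n}\bigr)$.

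\emph{Step~2 (reduction to a single codeword and symbol relaxation).} Because $X^n$ is a uniform permutation of $Z^n$ and $W$ acts letter-by-letter, $P_{Y^n|Z^n=z^n}$ is an exchangeable mixture of the product laws $W_{x^n}$ over sequences $x^n$ of the same type as $z^n$, and in particular it depends on $z^n$ only through $\hat P_{z^n}$. I would then apply the Neyman--Pearson converse \eqref{eq:strong_converse_of_NP_region} with a threshold $\gamma$ to the joint test of Step~1 --- equivalently, average over $m$ the elementary bound $W_{c_m}(D_m)\le\gamma\,\hat Q_{Y^n}(D_m)+W_{c_m}\!\bigl[\log\tfrac{dW_{c_m}}{d\hat Q_{Y^n}}>\log\gamma\bigr]$ and use $\sum_m\hat Q_{Y^n}(D_m)\le1$ --- to obtain $\alpha-\gamma/M\le\sup_{x^n\in\mathcal{X}^n}W_{x^n}\!\bigl[\log\tfrac{dW_{x^n}}{d\hat Q_{Y^n}}>\log\gamma\bigr]$, using the permutation structure (data processing for $\beta_\alpha$, together with the fact that $\hat Q_{Y^n}$ is matched to the type of $c_m$) to pass from $P_{Y^n|Z^n=c_m}$ to $W_{c_m}$. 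Optimizing over $\gamma$ and reading the supremized tail probability back through the Neyman--Pearson characterization \eqref{eq:beta_function} gives $\log M\le\sup_{x^n\in\mathcal{X}^n}-\log\beta_\alpha(W_{x^n},\hat Q_{Y^n})$; since the type of any codeword lies in $\Theta$, adjoining the outer supremum over $\pi\in\Theta$ and taking the supremum over all such codes yields \eqref{eq:meta_converse}.

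\emph{Where the difficulty lies.} The delicate point is the codeword-dependence of $\hat Q_{Y^n}$: the meta-converse wants a single auxiliary output law common to all messages, whereas \eqref{eq:bayes_distrubution} is tuned to the empirical distribution $\hat P_{x^n}$ of the codeword. Making Step~2 rigorous therefore requires either showing that this dependence is harmless --- because the divergence-covering construction of \cite{Jennifer_2022} makes $D(W_x\|Q^*_x)$ essentially uniform in $x$, so that for every codeword the log-likelihood ratios concentrate around a common value --- or grouping codewords by type in a way that does not inflate the first-order $\log n$ term; it also requires verifying that the data-processing passage from the exchangeable mixture $P_{Y^n|Z^n=z^n}$ to $W_{z^n}$ does not loosen the bound on $\beta_\alpha$. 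Once these are settled, the remaining work (the $\gamma$-optimization and the translation between tail probabilities of $\log\tfrac{dW_{x^n}}{d\hat Q_{Y^n}}$ and $\beta_\alpha$) is routine given \eqref{eq:strong_converse_of_NP_region} and the Neyman--Pearson lemma.
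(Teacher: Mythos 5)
Your plan has the right overall shape---meta-converse via a hypothesis test against a Bayes mixture, followed by a ``symbol relaxation'' to reduce to per-codeword likelihood ratios---but it diverges from the paper's route, and the divergence is exactly where the gaps you flag at the end show up. The paper does not start from $P_{Z^n}$ uniform on the codewords and the permutation-averaged conditional $P_{Y^n\mid Z^n}$. Instead it sets up the Markov chain $M\to\Theta\to Z^n\to X^n\to Y^n\to\hat M$, samples $Z^n\stackrel{\mathrm{i.i.d.}}{\sim}\pi$, and --- crucially --- lets the test $P_{V\mid\Theta,Z^n,X^n,Y^n}$ observe the post-permutation input $X^n$, so that the conditional law appearing in the hypothesis test is the product DMC $W^n$ itself, not an exchangeable mixture. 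This removes, in one stroke, the entire ``pass from $P_{Y^n\mid Z^n=c_m}$ to $W_{c_m}$'' step of your Step~2: there is nothing to pass from. Your verbal appeal to ``data processing for $\beta_\alpha$'' plus ``$\hat Q_{Y^n}$ matched to the type'' does not by itself land this; the correct justification is that $\hat Q_{Y^n}$ is exchangeable and the law of $\hat P_{Y^n}$ is the same under $P_{Y^n\mid Z^n=c_m}$ and under $W_{c_m}$, so any type-measurable test has identical $(\alpha,\beta)$ in the two problems and therefore $\beta_\alpha(W_{c_m},\hat Q_{Y^n})\le\beta_\alpha(P_{Y^n\mid Z^n=c_m},\hat Q_{Y^n})$ --- but you do not supply this argument.

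The second gap is more serious and actually breaks your Step~1 as written. The inequality $\frac{1}{M}\sum_m\hat Q_{Y^n}(D_m)\le\frac1M$ uses that the $D_m$ are disjoint \emph{and that $\hat Q_{Y^n}$ is a single probability measure}. But \eqref{eq:bayes_distrubution} is a function of $\hat P_{x^n}$, i.e.\ of the codeword, so $\sum_m\hat Q_{Y^n}^{(c_m)}(D_m)$ is a sum of masses of disjoint sets under \emph{different} measures and need not be at most~$1$. You flag this as the delicate point, but in your chain of implications it is not merely delicate --- the meta-converse step fails without further work. The paper's proof avoids this by ordering the moves differently: it first derives, for an arbitrary fixed $Q_{Y^n}$, the chain $\frac1{|\mathcal M|}\ge\beta_\alpha(P_{\Theta,Z^n,X^n}W^n,\,P_{\Theta,Z^n,X^n}Q_{Y^n})\ge\inf_\pi\inf_{x^n}Q\!\left[\log\tfrac{W_{x^n}}{Q_{Y^n}}\ge\log\gamma\right]$ (equations \eqref{eq:meta_converse_proof_1}--\eqref{eq:meta_converse_proof_2}), i.e.\ it performs the symbol relaxation while $Q_{Y^n}$ is still generic, and only \emph{then} instantiates $Q_{Y^n}=\hat Q_{Y^n}$ inside the per-$(\pi,x^n)$ term. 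If you restructure Step~1 to (i) condition the test on $X^n$ as the paper does, so the ``true'' law in the test is $W^n$, and (ii) defer the substitution of $\hat Q_{Y^n}$ until after the double infimum, both of your flagged difficulties disappear and the remaining Neyman--Pearson bookkeeping is the routine part you describe.
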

		Proposition \ref{proposition:meta_converse_of_permutation channel} is proved in Appendix \ref{Appendix A}. We deal with distributions that are $n$-fold products of a fixed distribution since the permutation block acts on the entire input sequence. Additionally, we refer to the idea of symbol relaxation  \cite{tomamichel_tight_2013} to make our result apply to the average error formulation. We can apply the result of the divergence covering number to Proposition \ref{proposition:meta_converse_of_permutation channel} since it holds for any set of divergence covering centers.
	
\subsection{Asymptotics and Bounds Based on Divergence Covering}	\label{Subsection 3_asy}
	To derive the asymptotic expansion of Proposition \ref{proposition:meta_converse_of_permutation channel}, we will be concerned with the asymptotic behavior of the $\beta$ function and bounds of $V_n$ and $T_n$. 
	
	The following Lemma will be particularly useful.
	\begin{lemma}\label{lemma:prob_change}
				In the notation of Proposition \ref{proposition:meta_converse_of_permutation channel}, we have
				\begin{align}
					&\mathbb{P}  \left[  \log \frac{W_{x^n}}{Q_{Y^n}} \ge \log \gamma \right] \nonumber \\
					& \ \ \ \le \mathbb{P} \left[ \sum_{i=1}^{n} \log \frac{W_{x_i}}{Q_{x_i}^*} \ge \log \gamma -  \log E\right], 
				\end{align}
				where $\mathbb{P}$ means under any probability $P^n$.
	\end{lemma}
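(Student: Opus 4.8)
The plan is to exploit the fact that $\hat{Q}_{Y^n}$ in (\ref{eq:bayes_distrubution}) is a \emph{uniform mixture} of $E$ many $n$-fold product distributions, one of whose mixands is built precisely from the divergence covering centers $Q_x^*$ of the $W_x$. Discarding all but that one mixand gives a pointwise lower bound on $\hat{Q}_{Y^n}$, which turns into a pointwise upper bound on the log-likelihood ratio, and then monotonicity of probability delivers the claimed inequality for \emph{any} law $P^n$.

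First I would rewrite the mixand associated with a group $\mathcal{N}_e$ in a transparent form. Since for each $x\in[|\mathcal{X}|]$ there are exactly $n\hat{P}_{x^n}(x)$ coordinates $i$ with $x_i=x$, the double product $\prod_{Q_x\in\mathcal{N}_e}\prod_{i:\,x_i=x}Q_x(y_i)$ is simply $\prod_{i=1}^{n}Q_{x_i}^{(e)}(y_i)$, where $Q_{x_i}^{(e)}\in\mathcal{N}_e$ is the center that group $\mathcal{N}_e$ assigns to the letter $x_i$. Hence $\hat{Q}_{Y^n}(y^n)=\frac{1}{E}\sum_{e=1}^{E}\prod_{i=1}^{n}Q_{x_i}^{(e)}(y_i)$. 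By the construction preceding Proposition \ref{proposition:meta_converse_of_permutation channel}, the first group $\mathcal{N}_1$ assigns to each letter $x$ its divergence covering center $Q_x^*$, so the $e=1$ term equals $\prod_{i=1}^{n}Q_{x_i}^*(y_i)$. All $E$ summands are nonnegative, so for every $y^n$,
\[
\hat{Q}_{Y^n}(y^n)\ \ge\ \frac{1}{E}\prod_{i=1}^{n}Q_{x_i}^*(y_i).
\]

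Next I would take logarithms. Using the memoryless identity (\ref{eq:memroyless_property}) to write $W_{x^n}(y^n)=\prod_{i=1}^{n}W(y_i|x_i)$, and noting that strict positivity of $W$ forces $W_{x_i}(y_i)>0$ while $Q_{x_i}^*(y_i)>0$ (else $D(W_{x_i}\|Q_{x_i}^*)=\infty$), every quantity below is finite, and pointwise in $y^n$,
\[
\log\frac{W_{x^n}(y^n)}{Q_{Y^n}(y^n)}\ \le\ \log E+\sum_{i=1}^{n}\log\frac{W_{x_i}(y_i)}{Q_{x_i}^*(y_i)}.
\]
Therefore the event $\big\{\log(W_{x^n}/Q_{Y^n})\ge\log\gamma\big\}$ is contained in the event $\big\{\sum_{i=1}^{n}\log(W_{x_i}/Q_{x_i}^*)\ge\log\gamma-\log E\big\}$, and applying an arbitrary probability measure $P^n$ together with monotonicity yields the lemma.

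The whole argument is essentially a ``keep one term of the mixture'' estimate, so I do not anticipate a real obstacle. The only point that needs care is the bookkeeping in (\ref{eq:bayes_distrubution}): identifying the $e$-th mixand with the product $\prod_{i}Q_{x_i}^{(e)}(y_i)$ and recalling that $\mathcal{N}_1$ is exactly the group made of the covering centers $Q_x^*$, so that dropping the remaining mixands produces precisely the additive $\log E$ and the sum of independent terms on the right-hand side. Support/positivity issues are immediate under the strictly-positive-matrix hypothesis.
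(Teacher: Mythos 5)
Your proof is correct and follows essentially the same route as the paper's: both arguments are a ``keep one term of the uniform mixture'' estimate. You observe that $\hat Q_{Y^n}$ is a uniform average of $E$ product distributions, single out the mixand built from the covering centers $Q_{x}^{*}$, drop the remaining nonnegative summands to get the pointwise lower bound $\hat Q_{Y^n}(y^n)\ge\frac{1}{E}\prod_{i}Q_{x_i}^{*}(y_i)$, take logarithms, and conclude by monotonicity of $P^n$; this is precisely what Appendix~B does. One small comparison worth making: the paper phrases the key step as ``for any $x$, there exists $\mathcal{N}_e$ with $Q_x^*\in\mathcal{N}_e$'' and then silently uses a \emph{single} fixed $e$ for the whole sequence $x^n$; you are more careful in making this explicit by pinning down $\mathcal{N}_1$ as the group whose label-$x$ element is $Q_x^*$ for every $x$, which is the clean way to state what the construction preceding Proposition~\ref{proposition:meta_converse_of_permutation channel} must be understood to do. Your side remarks on positivity of $W$ and of the $Q_{x_i}^*$ (via finiteness of $D(W_{x_i}\|Q_{x_i}^*)$) are correct and tighten up a point the paper leaves implicit.
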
	
	Lemma \ref{lemma:prob_change} is proved in Appendix \ref{Appendix B}. It allows us to only deal with the covering center of $W_x$ and the divergence covering number of the $\ell$-dimensional simplex. 
	
	Then, we define
			\begin{align}
					D_n = & \frac{1}{n} \sum_{i=1}^{n} D(W_{x_i} \| Q_{x_i}^*), \
					V_n  =  \frac{1}{n} \sum_{i = 1}^{n} V(W_{x_i} \| Q_{x_i}^*), \\
					T_n = & \frac{1}{n} \sum_{i = 1}^{n} T(W_{x_i} \| Q_{x_i}^*).
			\end{align}
	Using Lemma \ref{lemma:prob_change}, we have the following asymptotic expansions.
	\begin{lemma} \label{lemma:asymptotic_bound_of_beta}
		In the notation of Proposition \ref{proposition:meta_converse_of_permutation channel}, if $V_n >0$, for any $\Delta > 0$, we have
		\begin{align} \label{eq:Berry_Esseen_bound_of_beta}
			&\log \beta_{\alpha}(W_{x^n} , Q_{Y^n}) \nonumber\\
							&\ \ \ \ge -\log E - nD_n + \log \frac{\Delta}{\sqrt{n}} \nonumber\\
							& \ \ \ \ \ \ + \sqrt{nV_n} \Phi^{-1}\left( \alpha - \frac{6T_n}{\sqrt{nV_n^3}} - \frac{\Delta}{\sqrt{n}}\right). 
		\end{align}
		In any case, we have
		\begin{align} \label{eq:Chebyshev_bound_of_beta}
			\log \beta_{\alpha}(W_{x^n} , Q_{Y^n}) & \ge -\log E - nD_n \nonumber\\ 
			 & \ \ \ - \sqrt{\frac{2nV_n}{1 - \epsilon}} + \log \frac{\alpha}{2}.
		\end{align}
	\end{lemma}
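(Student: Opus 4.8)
My plan is to start from the converse bound \eqref{eq:strong_converse_of_NP_region} on $\beta_\alpha$, specialized to $P = W_{x^n}$ and $Q = Q_{Y^n} = \hat Q_{Y^n}$. Rearranging \eqref{eq:strong_converse_of_NP_region} gives, for every $\gamma>0$,
\begin{equation*}
	\beta_\alpha(W_{x^n},Q_{Y^n}) \ge \frac{1}{\gamma}\left(\alpha - \mathbb{P}\!\left[\log\frac{W_{x^n}}{Q_{Y^n}} > \log\gamma\right]\right).
\end{equation*}
The first step is then to invoke Lemma \ref{lemma:prob_change} to replace the tail probability of $\log(W_{x^n}/Q_{Y^n})$ under $W_{x^n}$ by the tail of the i.i.d.-type sum $S_n := \sum_{i=1}^n \log(W_{x_i}/Q_{x_i}^*)$ at the shifted threshold $\log\gamma - \log E$; note $\mathbb{E}_{W_{x^n}}[S_n] = nD_n$, $\Var(S_n) = nV_n$, and the third absolute central moment is $nT_n$. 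This reduces everything to controlling a single scalar sum.

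For the Berry--Esseen bound \eqref{eq:Berry_Esseen_bound_of_beta}: choose the threshold $\log\gamma = \log E + nD_n + \sqrt{nV_n}\,t$ for a parameter $t$ to be fixed, so that $\mathbb{P}[S_n \ge \log\gamma - \log E]$ becomes $\mathbb{P}[(S_n - nD_n)/\sqrt{nV_n} \ge t]$. Applying the Berry--Esseen theorem to the normalized sum (with the standard constant, here bounded by the $6T_n/\sqrt{nV_n^3}$ term appearing in the statement) gives $\mathbb{P}[S_n - nD_n \ge \sqrt{nV_n}\,t] \le 1 - \Phi(t) + 6T_n/\sqrt{nV_n^3}$. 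Substituting into the rearranged \eqref{eq:strong_converse_of_NP_region} yields
\begin{equation*}
	\beta_\alpha \ge \frac{1}{\gamma}\left(\alpha - \frac{6T_n}{\sqrt{nV_n^3}} - (1-\Phi(t))\right) = \frac{1}{\gamma}\left(\Phi(t) - \frac{6T_n}{\sqrt{nV_n^3}} - (1-\alpha)\right).
\end{equation*}
Now I pick $t$ so that $\Phi(t) = \alpha - 6T_n/\sqrt{nV_n^3} - \Delta/\sqrt n + (1-\alpha)$, i.e. essentially $t = \Phi^{-1}(1 - 6T_n/\sqrt{nV_n^3} - \Delta/\sqrt n)$ after matching the form in the statement — more precisely the paper's expression $\Phi^{-1}(\alpha - 6T_n/\sqrt{nV_n^3} - \Delta/\sqrt n)$ arises by the symmetry $1-\Phi(t)=\Phi(-t)$ when the tail is written on the other side; I will track the sign convention carefully so the argument of $\Phi^{-1}$ matches. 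With this choice the bracket equals $\Delta/\sqrt n$, so $\beta_\alpha \ge (\Delta/\sqrt n)/\gamma$, and taking $-\log$ and expanding $\log\gamma = \log E + nD_n + \sqrt{nV_n}\,t$ produces exactly \eqref{eq:Berry_Esseen_bound_of_beta}.

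For the unconditional bound \eqref{eq:Chebyshev_bound_of_beta}: I replace Berry--Esseen by Chebyshev's inequality on $S_n$. Choosing the threshold $\log\gamma - \log E = nD_n + \sqrt{2nV_n/(1-\epsilon)}$ gives $\mathbb{P}[S_n \ge \log\gamma - \log E] \le (1-\epsilon)/2$ by Chebyshev (using $\alpha \le 1$, hence $\alpha \le 1-\epsilon$ holds only if $\alpha$ is suitably chosen; in this framework $\alpha = 1-\epsilon$, so the residual tail is at most $(1-\epsilon)/2 = \alpha/2$). Then $\beta_\alpha \ge (1/\gamma)(\alpha - \alpha/2) = \alpha/(2\gamma)$, and $-\log\beta_\alpha \le \log\gamma - \log(\alpha/2) = \log E + nD_n + \sqrt{2nV_n/(1-\epsilon)} - \log(\alpha/2)$, which is \eqref{eq:Chebyshev_bound_of_beta} after rearranging signs. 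The main obstacle I anticipate is bookkeeping rather than conceptual: getting the direction of the inequality in \eqref{eq:strong_converse_of_NP_region} right (it is stated as $\alpha - \gamma\beta \le \mathbb{P}[\log\frac{dP}{dQ} > \log\gamma]$, a bound on $\beta$ from below), correctly absorbing the $\log E$ shift from Lemma \ref{lemma:prob_change} into the threshold, and matching the precise argument of $\Phi^{-1}$ in the stated form — in particular verifying that $6T_n/\sqrt{nV_n^3}$ is a valid Berry--Esseen remainder (it is, since the classical constant is below $0.5 < 6$, and $1/\sqrt n$ is the normalization with $\Var = nV_n$). One should also note the hypothesis $V_n > 0$ is exactly what makes the normalization and $\Phi^{-1}$ well-defined.
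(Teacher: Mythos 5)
Your argument mirrors the paper's proof exactly: rearrange \eqref{eq:strong_converse_of_NP_region} to lower-bound $\beta_\alpha$, apply Lemma~\ref{lemma:prob_change} to pass to the tail of $\sum_i \log(W_{x_i}/Q_{x_i}^*)$ at the $\log E$-shifted threshold, and invoke Berry--Esseen (or Chebyshev) with $\log\gamma$ tuned so the residual in the bracket equals $\Delta/\sqrt n$ (or $\alpha/2$), together with $\alpha=1-\epsilon$. The one bookkeeping slip---which you yourself flag---is in the intermediate choice of $t$: to make $\Phi(t) - 6T_n/\sqrt{nV_n^3} - (1-\alpha) = \Delta/\sqrt n$ you need $\Phi(t) = 1-\alpha + 6T_n/\sqrt{nV_n^3} + \Delta/\sqrt n$, i.e.\ $t = -\Phi^{-1}\bigl(\alpha - 6T_n/\sqrt{nV_n^3} - \Delta/\sqrt n\bigr)$ and hence $\log\gamma = \log E + nD_n - \sqrt{nV_n}\,\Phi^{-1}(\cdot)$, which your final inequality does correctly reproduce.
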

	Lemma \ref{lemma:asymptotic_bound_of_beta} is provided in Appendix \ref{Appendix C}. It can be construed as a variant of \cite[Lemma 58]{polyanskiy_channel_2010}, except we focus on the asymptotic behavior between $W_x$ and its divergence covering center.
	
	Then, we present the following lemma to illustrate some properties of $V_n$ and $T_n$.
	\begin{lemma} \label{lemma:properities_Vn_Tn}
		For any $y \in \mathcal{Y}$, we have
			\begin{equation}
				V_n \le \frac{C_1}{n}, \ T_n \le \frac{C_2}{n^{3/2}},
			\end{equation}
			where $C_1$ and $C_2$ are constant.
	\end{lemma}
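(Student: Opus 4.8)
The plan is to establish the stronger estimates $V_n \le C' r$ and $T_n \le C'' r^{3/2}$, uniformly over $x^n \in \mathcal{X}^n$ (and hence over $\pi \in \Theta$); since the covering radius relevant to Proposition~\ref{proposition:meta_converse_of_permutation channel} and to the main result is taken of order $1/n$, say $r \le c_0/n$, these two estimates immediately yield the claimed $V_n \le C_1/n$ and $T_n \le C_2/n^{3/2}$. Because $V_n$ and $T_n$ are arithmetic means over $i\in[n]$ of $V(W_{x_i}\|Q_{x_i}^*)$ and $T(W_{x_i}\|Q_{x_i}^*)$, it suffices to bound a single term: for an arbitrary $x\in\mathcal{X}$, with $Q_x^*\in\mathcal{N}_r$ the divergence covering centre of $W_x$ (so that $D(W_x\|Q_x^*)\le r$ by construction), I will show $V(W_x\|Q_x^*)\le 4r$ and $T(W_x\|Q_x^*)=O(r^{3/2})$.

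The first step is to show that the log-likelihood ratio $L_x(y):=\log\frac{W_x(y)}{Q_x^*(y)}$ is not merely bounded but uniformly small. Since $W$ is strictly positive, $p_{\min}:=\min_{x,y}W(y|x)>0$, hence $W_x(y)\ge p_{\min}$ for all $x,y$. Combining $D(W_x\|Q_x^*)\le r$ with Pinsker's inequality gives $\max_y|W_x(y)-Q_x^*(y)|\le\sqrt{2r}$, so $Q_x^*(y)\ge p_{\min}-\sqrt{2r}\ge p_{\min}/2$ once $n$ is large enough that $\sqrt{2r}\le p_{\min}/2$. Therefore $\bigl|\tfrac{W_x(y)}{Q_x^*(y)}-1\bigr|\le 2\sqrt{2r}/p_{\min}$, and $b_n:=\max_{x,y}|L_x(y)|\le 4\sqrt{2r}/p_{\min}=O(\sqrt r)\to 0$.

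The second step converts this into moment bounds via a second-order expansion. Using $\mathbb{E}_{W_x}[L_x]=D(W_x\|Q_x^*)$ together with the identity $\mathbb{E}_{W_x}[e^{-L_x}]=\sum_y Q_x^*(y)=1$, I would expand $e^{-t}=1-t+\tfrac{t^2}{2}+R(t)$ with $|R(t)|\le\tfrac{|t|^3}{6}e^{|t|}$, take $\mathbb{E}_{W_x}$, and rearrange to get $\tfrac12\mathbb{E}_{W_x}[L_x^2]\le D(W_x\|Q_x^*)+\tfrac{e^{b_n}}{6}\mathbb{E}_{W_x}[|L_x|^3]$. Since $|L_x|\le b_n$ implies $\mathbb{E}_{W_x}[|L_x|^3]\le b_n\mathbb{E}_{W_x}[L_x^2]$, for all sufficiently large $n$ (so that $b_n$ is small) this self-improves to $\mathbb{E}_{W_x}[L_x^2]\le 4D(W_x\|Q_x^*)\le 4r$, hence $V(W_x\|Q_x^*)\le\mathbb{E}_{W_x}[L_x^2]\le 4r$. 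For the third moment, $|L_x-D(W_x\|Q_x^*)|^3\le 4\bigl(|L_x|^3+D(W_x\|Q_x^*)^3\bigr)$, so $T(W_x\|Q_x^*)\le 4\bigl(b_n\mathbb{E}_{W_x}[L_x^2]+r^3\bigr)\le 16 b_n r+4r^3=O(r^{3/2})$ using $b_n=O(\sqrt r)$. Averaging over $i$ preserves both bounds, and every constant depends only on $p_{\min}$ and $r$, hence is uniform in $x^n$ and $\pi$; the finitely many small values of $n$ are dealt with separately, since there $V(W_x\|Q_x^*)$ and $T(W_x\|Q_x^*)$ are finite (moments of a bounded variable, $Q_x^*$ having full support because $D(W_x\|Q_x^*)<\infty$) and are absorbed into $C_1,C_2$.

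I expect the crux to be the second step: obtaining an $O(r)$, rather than merely $O(1)$, bound on the variance. This forces one to exploit the second-order cancellation $\mathbb{E}_{W_x}[L_x^2]\approx 2D(W_x\|Q_x^*)$ in the expansion of $e^{-L_x}$, which is valid only because $L_x$ is uniformly small, and that smallness is exactly where strict positivity of $W$ and the $\Theta(1/n)$ scaling of the covering radius enter. The remaining delicate point is purely bookkeeping: verifying that $c_0$, $p_{\min}$ and the thresholds on $n$ are genuinely independent of the codeword $x^n$, of $\pi$, and of the particular covering centres selected in the construction of $\hat{Q}_{Y^n}$.
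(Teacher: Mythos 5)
Your proof is correct, and it takes a genuinely different route from the paper's. The paper establishes the key uniform bound $\max_{y}\bigl|\log\tfrac{W_x(y)}{Q_x^*(y)}\bigr|=O(1/\sqrt n)$ by arguing directly on the explicit grid $\Lambda_k(r_0)$ of equations~(\ref{eq:construst_covering_center_1})--(\ref{eq:construst_covering_center_3}): with $r_0=\Theta(1/n)$ the grid points near any fixed $W_x(y)\in(0,1)$ are spaced $\Theta(\sqrt{r_0})$ apart, and the ratio of adjacent points is $1+O(1/\sqrt n)$; this is case-split over whether $W_x(y)$ lies above or below its nearest grid coordinate, and then extended coordinate-by-coordinate to the $|\mathcal Y|$-dimensional grid. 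You instead deduce the same $b_n=O(\sqrt r)$ bound abstractly, from nothing more than the covering property $D(W_x\|Q_x^*)\le r$ (via Pinsker) together with $p_{\min}>0$. This buys you generality --- your argument works for \emph{any} divergence covering of radius $\Theta(1/n)$, not only the paper's particular grid --- and it makes the dependence on strict positivity explicit, whereas in the paper it is hidden inside the constants $F_y$ and $C(y)$ depending on the entries $W(y|x)$ being bounded away from $0$. The paper's approach, in exchange, yields explicit constants for the BSC computations it needs later.

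One remark: your ``second step'' (the Taylor expansion around the identity $\mathbb{E}_{W_x}[e^{-L_x}]=1$) is elegant but unnecessary for this lemma. Once you have $|L_x|\le b_n=O(\sqrt r)$ uniformly, the crude bounds $V(W_x\|Q_x^*)\le\mathbb{E}_{W_x}[L_x^2]\le b_n^2=O(r)$ and $\mathbb{E}_{W_x}[|L_x|^3]\le b_n^3=O(r^{3/2})$, combined with the $c_p$-inequality $|a-b|^3\le 4(|a|^3+|b|^3)$, already give $V_n=O(1/n)$ and $T_n=O(n^{-3/2})$ --- which is exactly how the paper's proof concludes. So the real crux is your first step, not your second; the self-improving $\mathbb{E}[L_x^2]\le 4D$ only buys a $p_{\min}$-independent constant, which the lemma does not require. (Incidentally, the paper writes the cube inequality with a minus sign, $|a-b|^3\le 4(|a|^3-|b|^3)$; that sign is a typo, and your plus-sign version is the correct one.)
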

	Lemma \ref{lemma:properities_Vn_Tn} is proved in Appendix \ref{Appendix D}. It illustrates that the variance of the log-likelihood ratio between $W_x$ and $Q_x$ vanishes at a rate of $O(1/n)$, and the third absolute moment vanishes at a rate of $O(1/n^{3/2})$.

\subsection{Main Results}	\label{Subsection 3_main}
	
	Now, through the above auxiliary conclusions, we can derive the main result of this section, which is the upper bound for the second-order asymptotics for the noisy permutation channel in the finite blocklength regime. Please refer to Appendix \ref{Appendix E} for the proof.
		\begin{theorem}\label{thm:asymptotic_bound_of_permutation_channel}
				For the noisy permutation channel with strictly positive matrices, the average error probability of code $\mathcal{C}$ satisfies:
				\begin{align}
					\log M^*(n,\epsilon) \le & \ \frac{ \ell}{2} \log n + \sqrt{nV_{\epsilon}} \Phi^{-1}(\epsilon) \nonumber \\
					&+ O(\log \log n).
				\end{align}
		\end{theorem}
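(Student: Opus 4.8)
The plan is to assemble Theorem~\ref{thm:asymptotic_bound_of_permutation_channel} from the three ingredients already in hand: the modified minimax meta-converse (Proposition~\ref{proposition:meta_converse_of_permutation channel}), the Berry--Esseen lower bound on $\log\beta_\alpha$ (Lemma~\ref{lemma:asymptotic_bound_of_beta}), and the decay rates of $V_n$ and $T_n$ (Lemma~\ref{lemma:properities_Vn_Tn}), together with the divergence-covering-number estimate \eqref{eq:covering_number_converse}. Starting from \eqref{eq:meta_converse}, I would first fix a covering radius $r = r_n$ that shrinks with $n$ in the right way — the natural choice being $r_n \asymp 1/n$ (up to $\log$ factors), so that the per-letter divergence term $nD_n$ stays bounded while the covering number $|\mathcal{N}_r|$ grows only polynomially in $n$. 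With $r_n$ of order $1/n$, \eqref{eq:covering_number_converse} gives $|\mathcal{N}_{r,K}| = O\!\big(n^{(K-1)/2}\,\mathrm{poly}\log n\big)$, hence $E = \lceil |\mathcal{N}_r|/|\mathcal X|\rceil$ and $\log E = \tfrac{\ell}{2}\log n + O(\log\log n)$, where $\ell = |\mathcal Y|-1$ is the relevant dimension (the covering is of $\Delta_\ell^*$).

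Next I would feed this into Lemma~\ref{lemma:asymptotic_bound_of_beta}: taking $\alpha = 1-\epsilon$ and the free parameter $\Delta = \Delta_n$ growing slowly (e.g.\ $\Delta_n = \sqrt{\log n}$ or a constant times $\log n$), the lower bound \eqref{eq:Berry_Esseen_bound_of_beta} yields
\begin{equation}
-\log\beta_\alpha(W_{x^n},Q_{Y^n}) \le \log E + nD_n - \log\frac{\Delta_n}{\sqrt n} - \sqrt{nV_n}\,\Phi^{-1}\!\Big(\epsilon - \tfrac{6T_n}{\sqrt{nV_n^3}} - \tfrac{\Delta_n}{\sqrt n}\Big).
\end{equation}
By Lemma~\ref{lemma:properities_Vn_Tn}, $nV_n \le C_1$ so $\sqrt{nV_n} = O(1)$ and the whole variance term is $O(1)$; the correction terms inside $\Phi^{-1}$ also vanish, so $\Phi^{-1}(\epsilon - o(1)) \to \Phi^{-1}(\epsilon)$. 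The $-\log(\Delta_n/\sqrt n) = \tfrac12\log n - \log\Delta_n$ term would, taken naively, give an extra $\tfrac12\log n$; the resolution is that one must also track $nD_n$: since each $D(W_{x_i}\|Q_{x_i}^*)\le r_n$ by the covering property and $r_n \asymp 1/n$, we get $nD_n = O(1)$ (or $O(\log n)$ with a slightly larger radius), and more importantly the $\tfrac12\log n$ from $-\log(\Delta_n/\sqrt n)$ is absorbed because the dominant term $\log E = \tfrac{\ell}{2}\log n$ already carries the first-order coefficient and all lower-order $\log n$ contributions collapse into $O(\log\log n)$ after optimizing $r_n$ and $\Delta_n$. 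I would then take the supremum over $\pi\in\Theta$ and $x^n\in\mathcal X^n$, replacing $V_n$ by its extremal value; because $\Phi^{-1}(\epsilon)$ is negative for $\epsilon<1/2$ and positive for $\epsilon\ge 1/2$, the sup selects $V_{\min}$ or $V_{\max}$ respectively, i.e.\ exactly $V_\epsilon$, giving $\sqrt{nV_\epsilon}\,\Phi^{-1}(\epsilon)$ — though since $\sqrt{nV_\epsilon} = O(1)$ this term is itself only $O(1)$ and the stated theorem keeps it mainly for interpretive/symmetry reasons with the matching classical form.

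The main obstacle I anticipate is the careful bookkeeping of the competing $\log n$ terms: $\log E$ contributes $+\tfrac{\ell}{2}\log n$, while $-\log(\Delta_n/\sqrt n)$ contributes another $+\tfrac12\log n$ and $nD_n$ could contribute more if $r_n$ is chosen too large. Making all of these simultaneously consistent requires choosing $r_n$ in a narrow window — large enough that $|\mathcal N_{r_n}|$ is not super-polynomial (so $\log E$ does not blow up) yet small enough that $nD_n = o(\log\log n)$ — and choosing $\Delta_n$ to grow just fast enough that $\tfrac{6T_n}{\sqrt{nV_n^3}} + \Delta_n/\sqrt n \to 0$ (using $T_n \le C_2 n^{-3/2}$ and $V_n \asymp n^{-1}$, the first term is $O(1)$ unless handled, so a more delicate argument or a lower bound $V_n \ge c/n$ on the variance is needed here) while $\log\Delta_n = O(\log\log n)$. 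Once the radius and $\Delta_n$ are pinned down, the remaining steps are routine continuity-of-$\Phi^{-1}$ and triangle-inequality estimates, and everything beyond the $\tfrac{\ell}{2}\log n$ lead term is swept into the $O(\log\log n)$ remainder. I would close by noting that the same computation with $\Phi^{-1}(\epsilon)$ bounded away from $\pm\infty$ for every $\epsilon\in(0,1)$ immediately yields the strong converse, since the bound on $\log M^*(n,\epsilon)$ depends on $\epsilon$ only through the $O(1)$ term $\sqrt{nV_\epsilon}\,\Phi^{-1}(\epsilon)$.
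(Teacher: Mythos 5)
Your high-level plan matches the paper's proof exactly — it assembles Proposition~\ref{proposition:meta_converse_of_permutation channel}, Lemma~\ref{lemma:asymptotic_bound_of_beta}, Lemma~\ref{lemma:properities_Vn_Tn}, and the divergence-covering-number bound \eqref{eq:covering_number_converse}, fixes $r\asymp 1/n$ so that $\log E = \tfrac{\ell}{2}\log n + O(1)$ and $nD_n = O(1)$, and then reads off the remaining terms from the Berry--Esseen expansion with $\alpha = 1-\epsilon$. However, there is a concrete error in the step you flag as delicate: the choice of $\Delta_n$ and the claim that the resulting $\tfrac12\log n$ term is ``absorbed.'' If you pick $\Delta_n$ polylogarithmic (e.g.\ $\Delta_n = \sqrt{\log n}$), then
\begin{equation}
-\log\frac{\Delta_n}{\sqrt{n}} = \tfrac12\log n - \log\Delta_n = \tfrac12\log n - O(\log\log n),
\end{equation}
which adds a genuine $\tfrac12\log n$ to the upper bound. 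It is \emph{not} swallowed by the $\tfrac{\ell}{2}\log n$ already coming from $\log E$, nor can it disappear into $O(\log\log n)$; with your choice you would obtain $\log M^*(n,\epsilon)\le \tfrac{\ell+1}{2}\log n + O(\log\log n)$, i.e.\ the coefficient of $\log n$ would be strictly larger than the theorem asserts. The fix is to scale $\Delta$ up by a factor of $\sqrt{n}$: the paper sets $\Delta = 2^{-G_1\log\log n}\cdot\sqrt{n}$, so that $\Delta/\sqrt n = (\log n)^{-G_1}\to 0$ slowly (keeping $\Phi^{-1}$ finite for large $n$) while $-\log(\Delta/\sqrt n) = G_1\log\log n = O(\log\log n)$. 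In short, one wants $\Delta/\sqrt n$ to decay \emph{sub-polynomially}, not $\Delta$ itself to grow sub-polynomially; your parameterization is off by a factor of $\sqrt n$ and the ``absorption'' argument does not repair it.

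One further remark: your observation that the Berry--Esseen correction $6T_n/\sqrt{nV_n^3}$ needs a matching \emph{lower} bound $V_n \gtrsim 1/n$ (not just the upper bound from Lemma~\ref{lemma:properities_Vn_Tn}) to vanish is a fair point — the paper's statement that this ratio can be made arbitrarily small ``by Lemma~\ref{lemma:properities_Vn_Tn}'' is indeed eliding a two-sided control $V_n \asymp 1/n$, $T_n \asymp n^{-3/2}$ under which the ratio is $O(n^{-1/2})$, not merely $O(1)$ as you conjectured. But this is a secondary issue; the decisive gap in your write-up is the $\Delta$-scaling.
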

		\begin{remark}\label{remark:main_result}
			In the proof of Theorem \ref{thm:asymptotic_bound_of_permutation_channel}, we find that for any $G_1>0$, there exists an $N_0$ sufficiently large, such that $O(\log \log n) = G_1 \log \log n + O(1)$ holds for $n \ge N_0$. Since the constant $G_1$ cannot be determined, we use $O(\log \log n)$ to represent the third-order and $O(1)$ terms. 
		\end{remark}
		
		Using Theorem \ref{thm:asymptotic_bound_of_permutation_channel}, we can obtain the strong converse directly.
		\begin{corollary} \label{corollary:strong_converse_of_permutation_channel}
			For the noisy permutation channel with strictly positive matrices, the $\epsilon$-capacity must satisfy
			\begin{equation}
				C_\epsilon \le \frac{\ell}{2}.
			\end{equation}
		\end{corollary}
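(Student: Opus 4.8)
\emph{Proof proposal.} The plan is to obtain the corollary as an immediate consequence of Theorem~\ref{thm:asymptotic_bound_of_permutation_channel}. Starting from the finite-blocklength converse
\[
\log M^*(n,\epsilon) \le \frac{\ell}{2}\log n + \sqrt{nV_\epsilon}\,\Phi^{-1}(\epsilon) + O(\log\log n),
\]
I would divide through by $\log n$ and pass to the limit superior as $n\to\infty$. The first term contributes exactly $\ell/2$. The third term satisfies $O(\log\log n)/\log n\to 0$, using the refinement in Remark~\ref{remark:main_result} that this contribution is $G_1\log\log n+O(1)$ for $n$ large. For the second term, note that $\epsilon\in(0,1)$ is fixed and bounded away from the endpoints, so $\Phi^{-1}(\epsilon)$ is a finite constant; and by Lemma~\ref{lemma:properities_Vn_Tn}, applied to a constant input sequence $x^n=(x,\dots,x)$, we get $V(W_x\|Q_x^*)\le C_1/n$ for every $x\in\mathcal{X}$, hence $V_\epsilon\le C_1/n$ since it is a convex combination of such per-symbol variances. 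Therefore $\sqrt{nV_\epsilon}=O(1)$ and $\sqrt{nV_\epsilon}\,\Phi^{-1}(\epsilon)/\log n\to 0$.

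Combining these three observations yields
\[
\limsup_{n\to\infty}\frac{\log M^*(n,\epsilon)}{\log n}\le\frac{\ell}{2},
\]
and since $C_\epsilon=\liminf_{n\to\infty}\log M^*(n,\epsilon)/\log n\le\limsup_{n\to\infty}\log M^*(n,\epsilon)/\log n$, we conclude $C_\epsilon\le\ell/2$. Crucially, the leading coefficient $\ell/2$ in Theorem~\ref{thm:asymptotic_bound_of_permutation_channel} is the same for every $\epsilon\in(0,1)$, so this upper bound on $C_\epsilon$ holds uniformly over $\epsilon\in(0,1)$; this is precisely the strong converse, in contrast to a weak converse which would only deliver the bound as $\epsilon\to 0^+$.

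There is essentially no obstacle beyond Theorem~\ref{thm:asymptotic_bound_of_permutation_channel} itself, which has already done the heavy lifting. The one point deserving a moment's care is verifying that the second-order term is genuinely $o(\log n)$ \emph{uniformly} in $n$: this is why Lemma~\ref{lemma:properities_Vn_Tn} is invoked, to guarantee that $nV_\epsilon$ remains bounded by an absolute constant rather than growing with $n$ (which would happen for a fixed covering radius). Once this is in hand, the passage to the $\liminf$ defining $C_\epsilon$ is routine.
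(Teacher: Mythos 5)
Your proposal is correct and follows essentially the same route as the paper: both invoke Lemma~\ref{lemma:properities_Vn_Tn} to conclude $V_\epsilon = O(1/n)$, hence $\sqrt{nV_\epsilon}\,\Phi^{-1}(\epsilon) = O(1)$, so that after dividing Theorem~\ref{thm:asymptotic_bound_of_permutation_channel} by $\log n$ the second- and third-order terms vanish. Your write-up is only slightly more explicit (spelling out the convex-combination step for $V_\epsilon$ and the $\limsup\ge\liminf$ reduction), but the key idea and the dependence on Lemma~\ref{lemma:properities_Vn_Tn} are identical.
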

		\begin{proof}
			Using Lemma \ref{lemma:properities_Vn_Tn}, we can find a constant $F_0$ such that $V_{min} \le F_0/n$ and a constant $F_1$ such that $V_{max} \le F_1/n$. Then we have $\sqrt{nV_{min}} \le F_0$ and $\sqrt{nV_{max}} \le F_1$. Substitute these into Theorem \ref{thm:asymptotic_bound_of_permutation_channel}, we get the following result, which implies the strong converse:
						\begin{equation}
							\log  M^*(n, \epsilon) \le \frac{|\mathcal{Y}-1|}{2} \log n + o(\log n).
						\end{equation}
		\end{proof}
		
		\begin{remark}
			Corollary \ref{corollary:strong_converse_of_permutation_channel} is stronger than the converse bound in \cite{makur_coding_2020}. If $\mathsf{rank}(W) = |\mathcal{Y}|$, it is also stronger than the converse bound in \cite{tang_capacity_2023}. As we will see in Corollary \ref{corollary:epsilon_capacity_of_BSC_permutation_channel} in Section \ref{Section 4},  extending this bound to $\epsilon$-capacity is straightforward with restriction $\mathsf{rank}(W) = |\mathcal{Y}|$.
		\end{remark}

\section{Explicit Bounds  for BSC Permutation Channels} \label{Section 4}
	To complement our main result in Section \ref{Section 3}, we compute the explicit bound of the BSC permutation channel using the results in Section \ref{Section 3}. We refer to the scheme in \cite{Jennifer_2022} to construct a grid-like structure and analyze the bounds based on this structure. 
	
	In $2$-dimensional case, we set $\mathcal{N}_r = \bigcup_{\mu =1}^{|\mathcal{N}_{r}|} (q_\mu,1-q_\mu)$. The following result can be obtained by Proposition \ref{proposition:meta_converse_of_permutation channel}. We refer to the idea in \cite[Theorem 35]{polyanskiy_channel_2010} to complete the proof. It can be considered as a direct application of Neyman-Pearson Lemma.
	\begin{theorem}\label{thm:meta_converse_of_BSC_permutation_channel}
		For the BSC permutation channel with crossover probability $\delta$, we have
		\begin{equation}
			\log M^*(n,\epsilon) \le -\log \beta_{\alpha}^n.
		\end{equation}
		 $ \beta_{\alpha}^n$ is defined as
		\begin{align}
			\beta_{\alpha}^n =& \lambda \binom{n}{T}Q_{n,T} + \sum_{t=0}^{T-1}\binom{n}{t}Q_{n,t}, \\
			Q_{n,t} &= \frac{1}{|\mathcal{N}_r|-2K}\sum_{\mu = K}^{|\mathcal{N}_r|-K} q_\mu^t(1-q_\mu)^{n-t},
		\end{align}
		where $\lambda \in (0,1)$, integer $K$ such that $q_{K} < \delta$ and the integer $T$ is defined by
		\begin{align}
			\alpha =  \lambda \binom{n}{T}\delta^T(1-\delta)^{n-T}+ \sum_{t=0}^{T-1}\binom{n}{t}\delta^T(1-\delta)^{n-T}.
		\end{align}
	\end{theorem}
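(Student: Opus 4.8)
The plan is to specialize Proposition~\ref{proposition:meta_converse_of_permutation channel} to the BSC and then evaluate the resulting binary hypothesis test \emph{exactly} via the Neyman--Pearson lemma, in the spirit of \cite[Theorem~35]{polyanskiy_channel_2010}. By Proposition~\ref{proposition:meta_converse_of_permutation channel} with $Q_{Y^n}=\hat Q_{Y^n}$, it suffices to exhibit a single number $\beta_\alpha^n$ with $\beta_\alpha(W_{x^n},\hat Q_{Y^n})\ge\beta_\alpha^n$ for every $x^n\in\{0,1\}^n$, since then $\log M^*(n,\epsilon)\le\sup_{x^n}-\log\beta_\alpha(W_{x^n},\hat Q_{Y^n})\le-\log\beta_\alpha^n$. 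For the BSC with crossover $\delta$ we have $W_{x^n}(y^n)=\delta^{t}(1-\delta)^{n-t}$ with $t=d_H(x^n,y^n)$ the Hamming distance, so under $P=W_{x^n}$ the statistic $t$ is $\mathsf{bin}(n,\delta)$ and is sufficient for $x^n$. Each covering center has the form $(q_\mu,1-q_\mu)$, and because the grid of \cite{Jennifer_2022} is symmetric about $1/2$, the mixture $\hat Q_{Y^n}$ is invariant under the global bit-flip $y^n\mapsto\bar y^n$; combined with the symmetry of the BSC this makes $\beta_\alpha(W_{x^n},\hat Q_{Y^n})$ independent of $x^n$, so we may fix $x^n=0^n$, after which $t=w_H(y^n)$ is a sufficient statistic for \emph{both} hypotheses and the test may be carried out on $t\in\{0,\dots,n\}$ with no loss.

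Pushing $P$ and $\hat Q_{Y^n}$ forward through $y^n\mapsto t$ gives $P_t(t)=\binom{n}{t}\delta^{t}(1-\delta)^{n-t}$ and $\hat Q_t(t)=\binom{n}{t}\,Q_{n,t}$, where $Q_{n,t}$ is a weighted average of the binomial weights $q_\mu^{t}(1-q_\mu)^{n-t}$ arising from (\ref{eq:bayes_distrubution}). The idea is then to restrict attention to the middle grid points, discarding the $K$ smallest and $K$ largest $q_\mu$ (those below $\delta$ or above $1-\delta$, which is exactly where the hypothesis $q_K<\delta$ enters), so that $Q_{n,t}=\frac{1}{|\mathcal N_r|-2K}\sum_{\mu=K}^{|\mathcal N_r|-K}q_\mu^{t}(1-q_\mu)^{n-t}$; one must check that this truncation only enlarges the attainable type-II error, so the inequality direction above is preserved. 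The crucial consequence of keeping only centers near $1/2$ is that the likelihood ratio $\delta^{t}(1-\delta)^{n-t}/Q_{n,t}$ becomes monotonically \emph{decreasing} in $t$: the numerator is strictly decreasing while the truncated mixture varies slowly in $t$, so it cannot overtake the numerator.

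With a monotone likelihood ratio in the scalar statistic $t$, the Neyman--Pearson lemma says the optimal level-$\alpha$ test rejects $H_1$ exactly on $\{t<T\}$ and randomizes with probability $\lambda$ on $\{t=T\}$, where $(T,\lambda)$ is determined by the size equation $\alpha=\lambda\binom{n}{T}\delta^{T}(1-\delta)^{n-T}+\sum_{t=0}^{T-1}\binom{n}{t}\delta^{t}(1-\delta)^{n-t}$. Its type-II error under $\hat Q_{Y^n}$ is then exactly $\beta_\alpha^n=\lambda\binom{n}{T}Q_{n,T}+\sum_{t=0}^{T-1}\binom{n}{t}Q_{n,t}$, and the reduction in the first paragraph concludes the argument.

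I expect the main obstacle to be the middle paragraph: rigorously collapsing the Bayes-mixture expression $\hat Q_{Y^n}$, with its grouping of centers into blocks $\mathcal N_e$ and the $1/E$ normalization, down to the clean uniform mixture over the middle grid points \emph{uniformly in} $x^n$, and verifying that after dropping the $K$ extreme centers the likelihood ratio is genuinely monotone in $t$ for every $n$ (the precise role of $q_K<\delta$ and of the symmetry of the grid). Once those bookkeeping facts are in place, the rest is a routine application of sufficiency and of the Neyman--Pearson lemma.
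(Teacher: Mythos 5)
Your approach is the same as the paper's: specialize Proposition~\ref{proposition:meta_converse_of_permutation channel} to the BSC, take the auxiliary output law to be a mixture over grid centers lying in $[\delta,1-\delta]$, reduce the binary hypothesis test to the scalar Hamming statistic, and invoke the Neyman--Pearson lemma exactly as in \cite[Theorem 35]{polyanskiy_channel_2010}. The obstacles you list at the end are less serious than you fear. The collapse of (\ref{eq:bayes_distrubution}) to a function of $d_H(x^n,y^n)$ is \emph{exact}, uniformly in $x^n$: each block $\mathcal N_e=\{(q_\mu,1-q_\mu),(1-q_\mu,q_\mu)\}$ contributes $\prod_{Q_x\in\mathcal N_e}\prod_{i\le n\hat P_{x^n}(x)}Q_x(y_i)=q_\mu^{\,n-d_H(x^n,y^n)}(1-q_\mu)^{\,d_H(x^n,y^n)}$, so no approximation is needed. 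The truncation to centers in $[\delta,1-\delta]$ is not a modification of a fixed $\hat Q_{Y^n}$ whose effect on $\beta_\alpha$ must be bounded (there is no such ordering to establish); it is simply the choice $\mathcal N_r$ covering $\Delta_\ell^*=\{(p,1-p):\delta\le p\le1-\delta\}$, which Proposition~\ref{proposition:meta_converse_of_permutation channel} explicitly permits. Finally, monotonicity of the likelihood ratio does not need the "varies slowly" heuristic: once every retained center satisfies $q_\mu\ge\delta$, the ratio $Q_\mu(t)/P(t)=(q_\mu/\delta)^t\bigl((1-q_\mu)/(1-\delta)\bigr)^{n-t}$ is non-decreasing in $t$ for each $\mu$, hence so is their average, and therefore $P_t(t)/Q_{n,t}$ is non-increasing and the optimal region is the threshold test with randomization at $T$ (note the theorem's indexing with $q_K<\delta$ is off by one relative to this argument, which needs all summed centers to satisfy $q_\mu\ge\delta$).
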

	\begin{Proof}
		Without loss of generality, let $\delta < 1/2$. We construct the set of covering centers based on (\ref{eq:construst_covering_center_2}), where we set $r_0 = \frac{1}{\tau n}$. There exists a $\tau$ such that the covering radius is less than $1/n$. For any $\mathcal{N}_e \in \mathcal{O}$, let $Q_{0}=(q_\mu,1-q_\mu), Q_{1}=(1-q_\mu,q_\mu)$. For any $\pi \in \Theta$, the output distribution is within $ (\delta,1-\delta)$. We only need to take $q_\mu \ge \delta$, i.e., 
		\begin{equation}
			Q_{Y^n}(|y^n|=t) = \frac{1}{|\mathcal{N}_r|-2K}\sum_{\mu = K}^{|\mathcal{N}_r|-K} q_\mu^t(1-q_\mu)^{n-t},
		\end{equation}
		where $K$ such that $q_K < \delta$, $|y^n|$ is the Hamming weight of the binary sequence $y^n$. For convenience of expression, let $Q_{n,t} = Q_{Y^n}$.
		
		The transition probabilities of BSC is
				\begin{equation}
					W^n (y^n|x^n)= \delta^{y^n-x^n} (1-\delta)^{n-|y^n-x^n|}.
				\end{equation}
		Now, fix crossover probability, we find the Hamming weight $|Y^n-X^n|$ is a sufficient statistic between $W^n $ and $Q_{Y^n}$. The optimal performance of type II error probability can be obtained as
		\begin{equation}
			\beta_{\alpha}^n = \lambda \binom{n}{T}Q_{n,T} + \sum_{t=0}^{T-1}\binom{n}{t}Q_{n,t}, 
		\end{equation}
		where the $\lambda $ is uniquely determined by the condition
		\begin{equation}
			\alpha =  \lambda \binom{n}{T}\delta^T(1-\delta)^{n-T}+ \sum_{t=0}^{T-1}\binom{n}{t}\delta^T(1-\delta)^{n-T}.
		\end{equation}
		Then, by Proposition \ref{proposition:meta_converse_of_permutation channel}, we have
		\begin{equation}
			\log M^*(n,\epsilon) \le -\log \beta_{\alpha}^n.
		\end{equation}
		{\hfill $\square$}
	\end{Proof}
	
	Next, we use Theorem \ref{thm:asymptotic_bound_of_permutation_channel} to compute the second-order asymptotics of the BSC permutation channel. Since the variance is defined based on the set of divergence covering centers, obtaining the explicit bound requires the grid constructed by (\ref{eq:construst_covering_center_2}). We get the second result by using our subset covering technique to construct the set of divergence covering centers for the subset of the simplex.
	\begin{theorem} \label{thm:asymptotic_bound_of_BSC_permutation_channel}
			For the BSC permutation channel, if $\epsilon \in (0,1/2)$, the average error probability of a code $\mathcal{C}$ satisfies:
			\begin{align} \label{eq:asymptotic_bound_of_BSC_permutation_channel_1}
					\log M^*(n,\epsilon) \le & \ \frac{1}{2} \log n + \sqrt{nV_{min}(\delta)} \Phi^{-1}(\epsilon) \nonumber \\ 
							&+ O(\log \log n),
			\end{align}
			and if $\epsilon \in \left[1/2,1\right)$, 
			\begin{align} \label{eq:asymptotic_bound_of_BSC_permutation_channel_2}
				\log M^*(n,\epsilon) \le & \ \frac{1}{2} \log n + \sqrt{nV_{max}(\delta)} \Phi^{-1}(\epsilon) \nonumber \\ 
				&+ O(\log \log n).
			\end{align}
			where $\tau = 2/\delta$, the definition of $V_{min}(\delta)$ and $V_{max}(\delta)$ see (\ref{eq:lower_bound_of_variance}) and (\ref{eq:upper_bound_of_variance}). 
		\end{theorem}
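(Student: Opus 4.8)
The plan is to derive Theorem~\ref{thm:asymptotic_bound_of_BSC_permutation_channel} by specializing Theorem~\ref{thm:asymptotic_bound_of_permutation_channel} to $|\mathcal{Y}| = 2$ and then evaluating the $\epsilon$-dispersion $V_\epsilon$ explicitly in terms of $\delta$. For the BSC one has $\ell = |\mathcal{Y}| - 1 = 1$, so Theorem~\ref{thm:asymptotic_bound_of_permutation_channel} already supplies the first-order term $\frac{1}{2}\log n$, the remainder $O(\log\log n)$, and a second-order term $\sqrt{nV_\epsilon}\,\Phi^{-1}(\epsilon)$ with $V_\epsilon = V_{min}$ for $\epsilon\in(0,1/2)$ and $V_\epsilon = V_{max}$ for $\epsilon\in[1/2,1)$. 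The only remaining task is to bound these dispersions by the computable quantities $V_{min}(\delta)$ and $V_{max}(\delta)$ of (\ref{eq:lower_bound_of_variance}) and (\ref{eq:upper_bound_of_variance}).

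First I would restrict the divergence covering to the portion of the simplex that is actually used. As noted in the proof of Theorem~\ref{thm:meta_converse_of_BSC_permutation_channel}, every output distribution $\pi W$ with $\pi\in\Theta$ has first coordinate in $[\delta,1-\delta]$, so it suffices to cover the segment $\{(p,1-p):p\in[\delta,1-\delta]\}$. Instantiating the grid construction of \cite{Jennifer_2022} in (\ref{eq:construst_covering_center_2}) with $r_0 = 1/(\tau n)$ and the calibration $\tau = 2/\delta$ makes the covering radius at most $1/n$, keeps the covering-number bound (\ref{eq:covering_number_converse}) intact, and --- crucially --- places the grid centers nearest the endpoints of $[\delta,1-\delta]$ at controlled offsets from $W_0 = W(\cdot|0)$ and $W_1 = W(\cdot|1)$. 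Consequently the $\frac{1}{2}\log n$ and $O(\log\log n)$ terms inherited from Theorem~\ref{thm:asymptotic_bound_of_permutation_channel} are unaffected.

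Next I would compute the relevant variances. Writing the covering centers of $W_0 = (1-\delta,\delta)$ and $W_1 = (\delta,1-\delta)$ as $Q_0^* = (q,1-q)$ and $Q_1^* = (1-q,q)$, the log-likelihood ratio $\log(W_0/Q_0^*)$ is a two-valued random variable taking $\log\frac{1-\delta}{q}$ with probability $1-\delta$ and $\log\frac{\delta}{1-q}$ with probability $\delta$, so
\[
V(W_0\|Q_0^*) = \delta(1-\delta)\left(\log\frac{(1-\delta)(1-q)}{\delta q}\right)^2 ,
\]
and the same for $W_1$. The covering requirement $D(W_0\|Q_0^*)\le 1/n$, together with the positive lower bound on this divergence forced by the controlled offset of the grid, confines $q$ to a short interval near $1-\delta$; substituting the two endpoints of that interval into the displayed formula and forming the convex combination $\pi_0 V(W_0\|Q_0^*) + \pi_1 V(W_1\|Q_1^*)$ over $\pi\in\Theta$ produces exactly the lower bound $V_{min}(\delta)$ and the upper bound $V_{max}(\delta)$. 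Finally, since $v\mapsto\sqrt{nv}\,\Phi^{-1}(\epsilon)$ is decreasing in $v$ when $\epsilon<1/2$ (where $\Phi^{-1}(\epsilon)<0$) and increasing when $\epsilon\ge1/2$, replacing $V_{min}$ by $V_{min}(\delta)$, resp. $V_{max}$ by $V_{max}(\delta)$, in Theorem~\ref{thm:asymptotic_bound_of_permutation_channel} only loosens the bound in the correct direction, yielding (\ref{eq:asymptotic_bound_of_BSC_permutation_channel_1}) and (\ref{eq:asymptotic_bound_of_BSC_permutation_channel_2}).

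The main obstacle is the third step: one must pin down the location of the grid centers nearest $q = \delta$ and $q = 1-\delta$ precisely enough that $V_{min}(\delta)$ comes out as a genuinely positive quantity and not zero (which would happen if a center landed exactly on $W_0$ or $W_1$). This is the reason the grid is \emph{modified} to cover only the relevant sub-simplex, and it is the role of the specific calibration $\tau = 2/\delta$: it ties the boundary spacing of the grid to $1/\sqrt n$ on both sides so that both the upper and lower estimates on $q$ are available. Everything else is a direct invocation of Theorem~\ref{thm:asymptotic_bound_of_permutation_channel} and the elementary two-point variance identity above.
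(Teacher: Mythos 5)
Your overall plan --- specialize Theorem~\ref{thm:asymptotic_bound_of_permutation_channel} to $\ell=1$, cover only the sub-simplex $\{(p,1-p):p\in[\delta,1-\delta]\}$ with a grid of spacing $\approx 1/\sqrt{n}$, and read off $V_{\min}(\delta), V_{\max}(\delta)$ from the location of the center nearest $W_x$ --- is exactly the route the paper takes, and your two-point variance identity $V(W_x\|(q,1-q))=\delta(1-\delta)\bigl(\log\tfrac{(1-\delta)(1-q)}{\delta q}\bigr)^2$ is correct and equivalent to the paper's form. However, there is a genuine gap at the one step you yourself flag as "the main obstacle." You assert that the covering requirement $D(W_x\|Q^*_x)\le 1/n$ "together with the positive lower bound on this divergence forced by the controlled offset of the grid" confines $q$ to the interval that yields (\ref{eq:lower_bound_of_variance})--(\ref{eq:upper_bound_of_variance}). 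But nothing in a generic $\varepsilon$-net forces any positive lower bound on $D(W_x\|Q^*_x)$: the construction (\ref{eq:construst_covering_center_1})--(\ref{eq:construst_covering_center_2}) could, for unlucky $\delta$, place a grid node arbitrarily close to (or on) $W_x$, which would send $V_{\min}(\delta)$ to $0$ and make the claimed formula false. You attribute the needed offset to the decision to cover only the sub-simplex and to the calibration $\tau=2/\delta$, but neither does this job --- the restriction to $\Delta_1^*$ merely trims the covering number, and $\tau=2/\delta$ is fixed by the requirement that the covering radius be $\le 1/n$ (in the paper: bounding the constant $c$ in $D(W_x\|Q_1)\le r_0 c/\delta$ by $2$), not by any statement about offsets.

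What the paper actually does, and what is missing from your write-up, is a deliberate \emph{translation} of the one-dimensional grid: because Theorem~\ref{thm:asymptotic_bound_of_permutation_channel} holds for \emph{any} set of covering centers, the paper is free to shift $\Lambda_2(r_0)$ so that $\delta$ lands exactly at the midpoint of the two adjacent nodes $\lambda_2=r_0(i-1)^2$ and $\lambda_1=r_0 i^2$, i.e., it imposes $\lambda_1-\delta=\tfrac12(\lambda_1-\lambda_2)=\tfrac{2i-1}{2\tau n}$. Combined with $\sqrt{\delta\tau n}\le i\le \sqrt{\delta\tau n}+1$, this gives
\[
\delta+\sqrt{\tfrac{\delta}{\tau n}}-\tfrac{1}{2\tau n}\ \le\ \lambda_1\ \le\ \delta+\sqrt{\tfrac{\delta}{\tau n}}+\tfrac{1}{2\tau n},
\]
and it is this two-sided control on $\lambda_1$ (not any covering constraint) that, when pushed through the variance, produces (\ref{eq:lower_bound_of_variance}) and (\ref{eq:upper_bound_of_variance}). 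To repair your argument, you must say explicitly that you \emph{choose} the translate of the grid so $W_x$ is equidistant from its two neighboring centers; otherwise the interval for $q$ you invoke does not exist.
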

		Theorem \ref{thm:asymptotic_bound_of_BSC_permutation_channel} is proved in Appendix \ref{Appendix F}. In addition, we have the following corollary to illustrate the $\epsilon$-capacity of BSC permutation channel.
		\begin{corollary} \label{corollary:epsilon_capacity_of_BSC_permutation_channel}
			For the BSC permutation channel, the $\epsilon$-capacity satisfies
			\begin{equation} \label{epsilon capacity of BSC permutation channel1}
				C_\epsilon = \frac{1}{2}.
			\end{equation}
		\end{corollary}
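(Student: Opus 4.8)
The plan is to establish the matching upper and lower bounds $C_\epsilon \le \tfrac12$ and $C_\epsilon \ge \tfrac12$ for every $\epsilon \in (0,1)$, since the BSC has $|\mathcal{Y}| = 2$ and hence $\ell = 1$, and $\mathsf{rank}(W) = 2 = |\mathcal{Y}|$ as long as $\delta \ne 1/2$. For the upper bound I would invoke Theorem \ref{thm:asymptotic_bound_of_BSC_permutation_channel} directly: in both regimes $\epsilon \in (0,1/2)$ and $\epsilon \in [1/2,1)$ the asymptotic expansion reads $\log M^*(n,\epsilon) \le \tfrac12 \log n + \sqrt{nV_\epsilon(\delta)}\,\Phi^{-1}(\epsilon) + O(\log\log n)$. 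By Lemma \ref{lemma:properities_Vn_Tn} (or the explicit estimates behind $V_{\min}(\delta)$ and $V_{\max}(\delta)$ used in Appendix \ref{Appendix F}), there is a constant $F$ with $\sqrt{nV_\epsilon(\delta)} \le F$ for all $n$, so the second-order term is $O(1)$ and the $O(\log\log n)$ term is $o(\log n)$. Dividing by $\log n$ and taking $\liminf_{n\to\infty}$ therefore gives $C_\epsilon \le \tfrac12$.

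For the lower bound I would appeal to the achievability result for noisy permutation channels already established by Makur (cited as \cite{makur_coding_2020,makur_bounds_2020}); the excerpt states that the lower bound there holds in general, and for a strictly positive $W$ with $\mathsf{rank}(W) = |\mathcal{Y}|$ the asymptotic capacity equals $\tfrac{|\mathcal{Y}|-1}{2} = \tfrac{\ell}{2}$, which for the BSC is exactly $\tfrac12$. Concretely, one constructs roughly $n^{1/2}$ codewords whose empirical input distributions (equivalently, whose induced output distributions $\pi W \in (\delta,1-\delta)$ parametrized by Hamming weight fraction) are spaced so that the corresponding binomial output statistics $\mathsf{bin}(n,\cdot)$ are distinguishable with vanishing error; a standard typicality/second-moment argument on the empirical distribution of $Y^n$ shows $P_e \to 0$, yielding $\log M^*(n,\epsilon) \ge \tfrac12\log n - o(\log n)$ for every $\epsilon > 0$, hence $C_\epsilon \ge \tfrac12$.

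Combining the two bounds gives $C_\epsilon = \tfrac12$ for all $\epsilon \in (0,1)$, which is exactly \eqref{epsilon capacity of BSC permutation channel1}. I expect the only real subtlety to be bookkeeping rather than depth: one must make sure the constant $F$ bounding $\sqrt{nV_\epsilon(\delta)}$ is genuinely uniform in $n$ (this is where the explicit form of $V_{\min}(\delta)$, $V_{\max}(\delta)$ from Appendix \ref{Appendix F} and Lemma \ref{lemma:properities_Vn_Tn} with $C_1/n$ is needed) and that the achievability bound from \cite{makur_coding_2020} is quoted with the correct rank hypothesis $\mathsf{rank}(W) = |\mathcal{Y}|$, which the BSC satisfies for $\delta \ne 1/2$; the degenerate case $\delta = 1/2$ (where the channel is useless and $C_\epsilon = 0$) can be excluded or handled separately. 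No new estimates are required — the corollary is a packaging of Theorem \ref{thm:asymptotic_bound_of_BSC_permutation_channel} together with the known converse-matching achievability.
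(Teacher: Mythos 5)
Your proof is correct and matches the paper's approach: achievability is inherited from Makur's general lower bound, and the converse follows from the paper's second-order machinery once one notes (via Lemma \ref{lemma:properities_Vn_Tn}, as you observe) that $\sqrt{nV_\epsilon(\delta)} = O(1)$, so the dispersion term does not contribute asymptotically. The only cosmetic difference is that you route through the BSC-specific Theorem \ref{thm:asymptotic_bound_of_BSC_permutation_channel}, whereas the paper simply invokes Corollary \ref{corollary:strong_converse_of_permutation_channel} (which was itself obtained from Theorem \ref{thm:asymptotic_bound_of_permutation_channel} by the same $\sqrt{nV}=O(1)$ argument and already gives $C_\epsilon \le \ell/2 = 1/2$).
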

		\begin{Proof}
			The achievability part has been proved in \cite{makur_coding_2020}. Using Corollary \ref{corollary:strong_converse_of_permutation_channel}, we get $C_\epsilon \le 1/2$. We combine the achievability and converse parts to obtain (\ref{epsilon capacity of BSC permutation channel1}).
			{\hfill $\square$}
		\end{Proof}
		
		\begin{remark} \label{remark:upper_property}
			Theorem \ref{thm:asymptotic_bound_of_BSC_permutation_channel} and Corollary \ref{corollary:epsilon_capacity_of_BSC_permutation_channel} imply that, although the asymptotic capacity of the BSC permutation channel is unrelated to the crossover probability, a smaller crossover probability has a higher upper bound at a fixed finite blocklength $n$.
		\end{remark}

\section{Numerical Results}\label{Section 5}
In this section, we perform numerical evaluations of the BSC permutation channel. The $\epsilon$-capacity of the BSC permutation channel is illustrated in Corollary \ref{corollary:epsilon_capacity_of_BSC_permutation_channel}, i.e., $C_\epsilon = 1/2$. 
	
	First, according to Theorem \ref{thm:meta_converse_of_BSC_permutation_channel}, we have the following upper bound:
	\begin{equation} \label{eq:numerical_converse}
		R(n,\epsilon) \le \frac{- \log \beta_{\alpha}^n}{\log n}.
	\end{equation}
	Then, we omit the third-order term. According to Theorem \ref{thm:asymptotic_bound_of_BSC_permutation_channel}, we obtain the following normal approximation:
			\begin{align} \label{eq:numerical_approximation}
					R(n,\epsilon) \lessapprox \frac{1}{2} + \frac{\sqrt{nV(\delta)} \Phi^{-1}(\epsilon)}{\log n} ,
			\end{align}
			where 
			\begin{equation}
			V(\delta) = 
				\begin{cases}
					V_{\min} \ & {\mathrm{if}} \  \epsilon < 1/2 \\
					V_{\max} \ & {\mathrm{if}} \  \epsilon > 1/2
				\end{cases}.
			\end{equation}
			The definition of $V_{\min}(\delta)$ and $V_{\max}(\delta)$ see (\ref{eq:lower_bound_of_variance}) and (\ref{eq:upper_bound_of_variance}).
				
	As we mentioned in Remark \ref{remark:main_result}, for any $G_1>0$, we have $O(\log \log n) = G_1 \log \log n +O(1)$ for $n$ sufficiently large. Let $G_1 = \frac{1}{16}$, we have 
	\begin{equation} \label{eq:numerical_third}
		R(n,\epsilon) \lessapprox  \frac{1}{2} + \frac{\sqrt{nV(\delta)} \Phi^{-1}(\epsilon)}{\log n} + \frac{\log \log n.}{16 \log n}.
	\end{equation}
\subsection{Precision of the Normal Approximation}
	We compare the bounds (\ref{eq:numerical_converse}), (\ref{eq:numerical_approximation}), and (\ref{eq:numerical_third}) with different crossover probability $\delta$ and average error probability $\epsilon$ in  Fig. \ref{fig:converse_comparison_1}, Fig. \ref{fig:converse_comparison_2}, Fig. \ref{fig:converse_comparison_3}, and Fig. \ref{fig:converse_comparison_4}. We note that (\ref{eq:numerical_approximation}) and (\ref{eq:numerical_third}) indicate the asymptotic behavior of (\ref{eq:numerical_converse}). Furthermore, (\ref{eq:numerical_third}) shows remarkable precision by taking a suitable constant $G_1$. Therefore, we can replace the complex computations of Theorem \ref{thm:meta_converse_of_BSC_permutation_channel} by Theorem \ref{thm:asymptotic_bound_of_BSC_permutation_channel}. 
		
	\begin{figure}[t]
				\centering
				\includegraphics[width = 0.45\textwidth]{./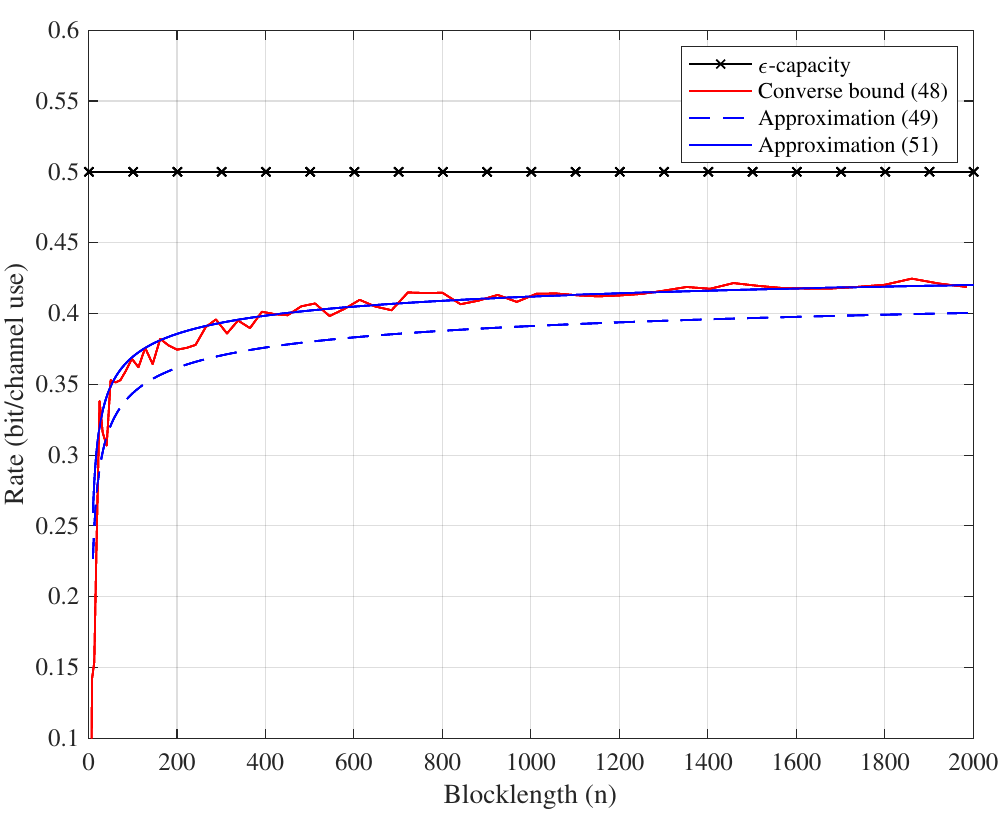}
				\captionsetup{font=footnotesize}
				\caption{\ Comparison of the converse bound and its approximation for the BSC permutation channel with crossover probability $\delta = 0.11$ and average error probability $\epsilon = 10^{-3}$.}
				\label{fig:converse_comparison_1}
		\end{figure}
			\begin{figure}[t]
				\centering
				\includegraphics[width = 0.45\textwidth]{./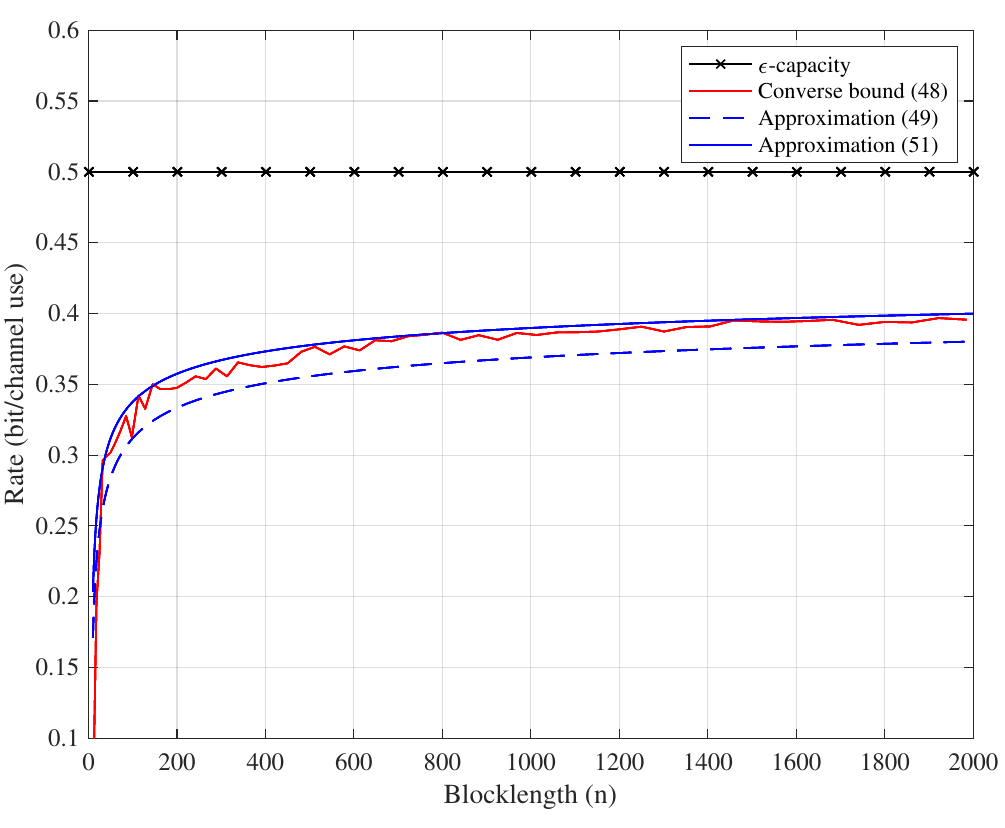}
				\captionsetup{font=footnotesize}
				\caption{\ Comparison of the converse bound and its approximation for the BSC permutation channel with crossover probability $\delta = 0.11$ and average error probability $\epsilon = 10^{-4}$.}
				\label{fig:converse_comparison_2}
			\end{figure}					   
		
	\begin{figure}[t]
					\centering
					\includegraphics[width = 0.45\textwidth]{./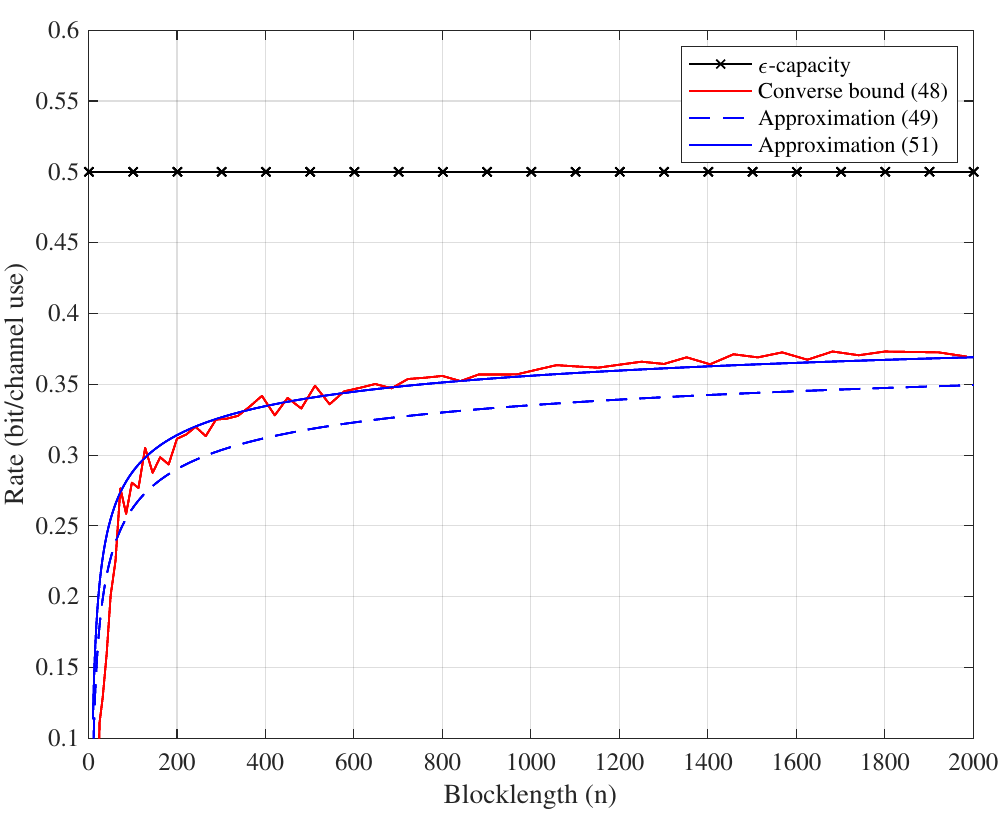}
					\captionsetup{font=footnotesize}
					\caption{\ Comparison of the converse bound and its approximation for the BSC permutation channel with crossover probability $\delta = 0.22$ and average error probability $\epsilon = 10^{-3}$.}
					\label{fig:converse_comparison_3}
		\end{figure}				
		
		\begin{figure}[t]
			\centering
			\includegraphics[width = 0.45\textwidth]{./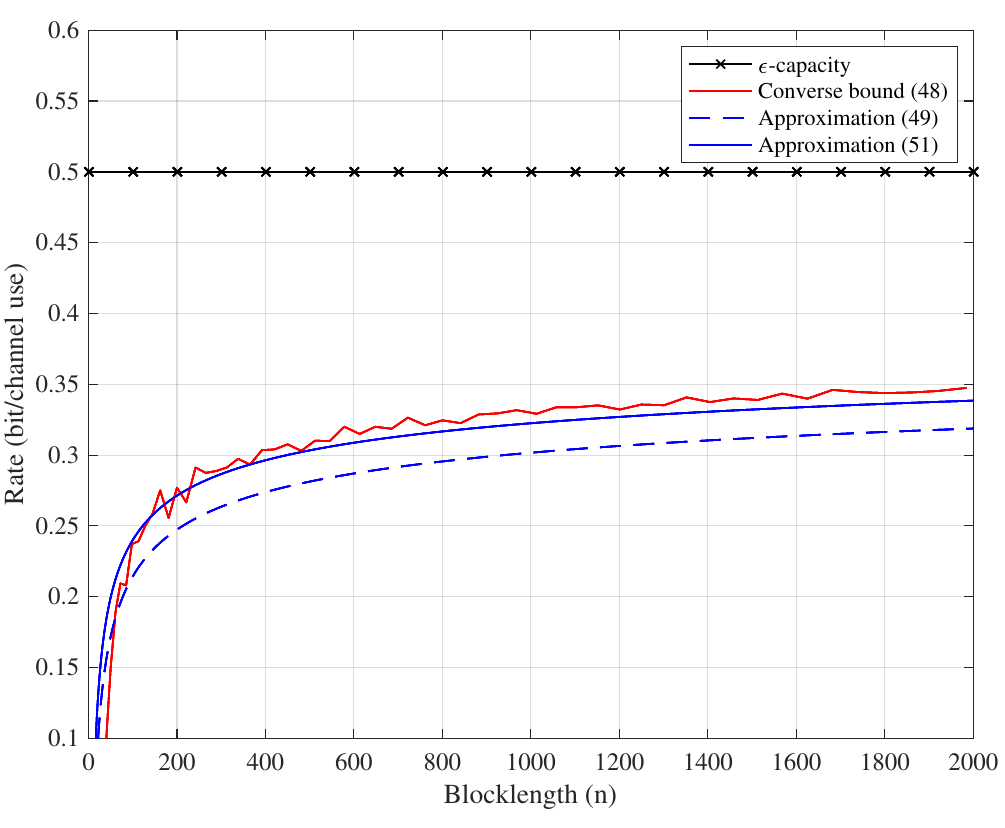}
			\captionsetup{font=footnotesize}
			\caption{\ Comparison of the converse bound and its approximation for the BSC permutation channel with crossover probability $\delta = 0.22$ and average error probability $\epsilon = 10^{-4}$.}
			\label{fig:converse_comparison_4}
		\end{figure}		
		
		\begin{remark}
			The computation results of  Theorem \ref{thm:meta_converse_of_BSC_permutation_channel} show fluctuations because the explicitly constructed grid-like structure is suboptimal. Studying the optimal form of the set of divergence covering centers can improve this result. However, this result still shows the asymptotic behavior of the BSC permutation channel very well. 
		\end{remark}		
		
\subsection{Properties of the Normal Approximation}
		Next, In Fig. \ref{fig:approximation_comparison}, let $\epsilon = 10^{-3}$, we compared the normal approximation (\ref{eq:numerical_approximation}) of the BSC permutation channel with different crossover probabilities. We make two conclusions from the plot:
			\begin{enumerate}
			\item As we discussed in Remark \ref{remark:upper_property}, when the blocklength $n$ is fixed and finite, a smaller crossover probability has a higher upper bound.
			\item The change of slope of the (\ref{eq:numerical_approximation}) is similar to the classical BSC. However, it should be noted that the rate of noisy permutation channel is defined as $\log M^*(n,\epsilon) / \log n$, which is different from the classical definition. 
			\end{enumerate}
			 
			\begin{figure}[t]
				\centering
				\includegraphics[width = 0.45\textwidth]{./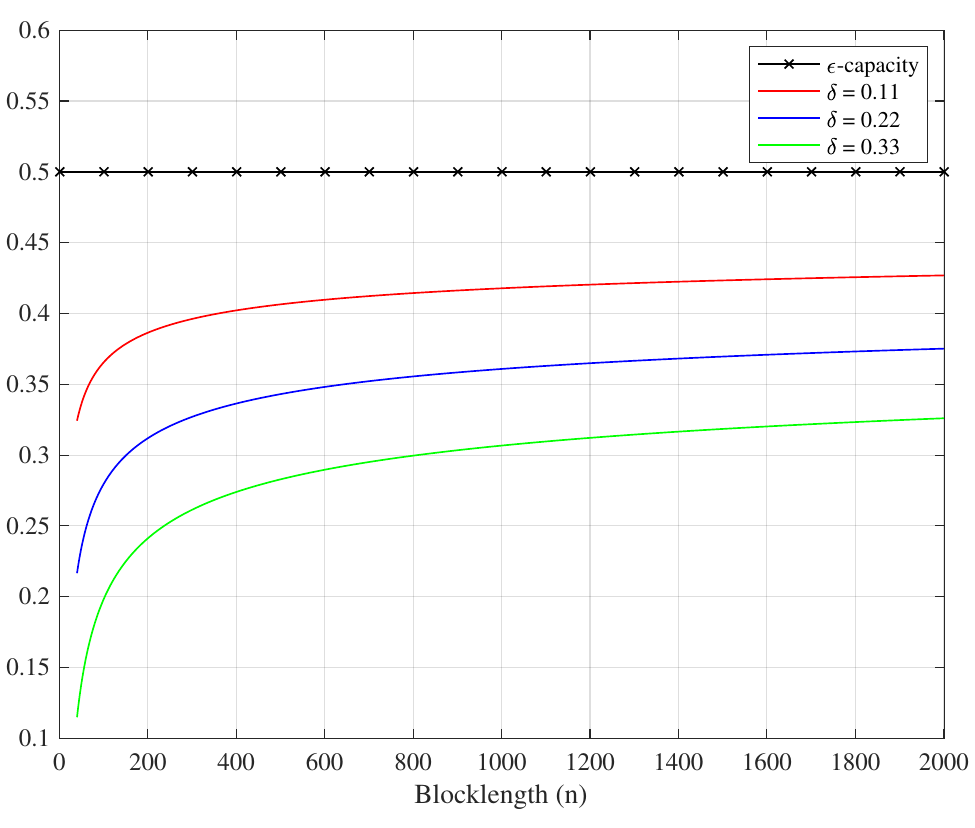}
				\captionsetup{font=footnotesize}
				\caption{\ Comparison of normal approximation of converse bound for the BSC permutation channel with average error probability $\epsilon = 10^{-3}$: different crossover probability. }
				\label{fig:approximation_comparison}
			\end{figure}

\subsection{Comparison with Previous Converse Bounds}
	In this subsection, we compare our new bound with the previous bound. We consider the BSC permutation channel, according to \cite[Theorem 2]{makur_coding_2020}, we have the following upper bound hold for every $n$:
	\begin{equation}\label{eq:num_pre}
		R(n,\epsilon) \le \frac{1 + I(X^n;\hat{P}_{Y^n})}{(1 - \epsilon)\log n},
	\end{equation}
	where
	\begin{align}
		I(X^n;\hat{P}_{Y^n})& \le \log(n+1) - \frac{2c(\delta)}{n-2} \nonumber\\
		& \ \ \ - \frac{1}{2} \log \left( 2 \pi e \delta(1-\delta) \frac{n-2}{2} \right).
	\end{align}
	To determine $c(p)$, let $\hat{X} \sim \mathsf{bin}(n,\delta)$, then we have 
	\begin{equation}
		c(\delta) = n \cdot\left| H(\hat{X}) - \frac{1}{2} \log (2 \pi e n \delta (1-\delta)) \right|.
	\end{equation}
	We consider the following setting: $\delta = 0.11$ and $\epsilon = 10^{-3}$. The bound (\ref{eq:numerical_converse}) and bound (\ref{eq:num_pre}) are compared in Fig. \ref{fig:comparison_with_pre}. We see that bound (\ref{eq:num_pre}) converges to the $\epsilon$-capacity above $1/2$, and our new bound gives better results than (\ref{eq:num_pre}) for every $n$.
	
	\begin{figure}[t]
					\centering
					\includegraphics[width = 0.45\textwidth]{./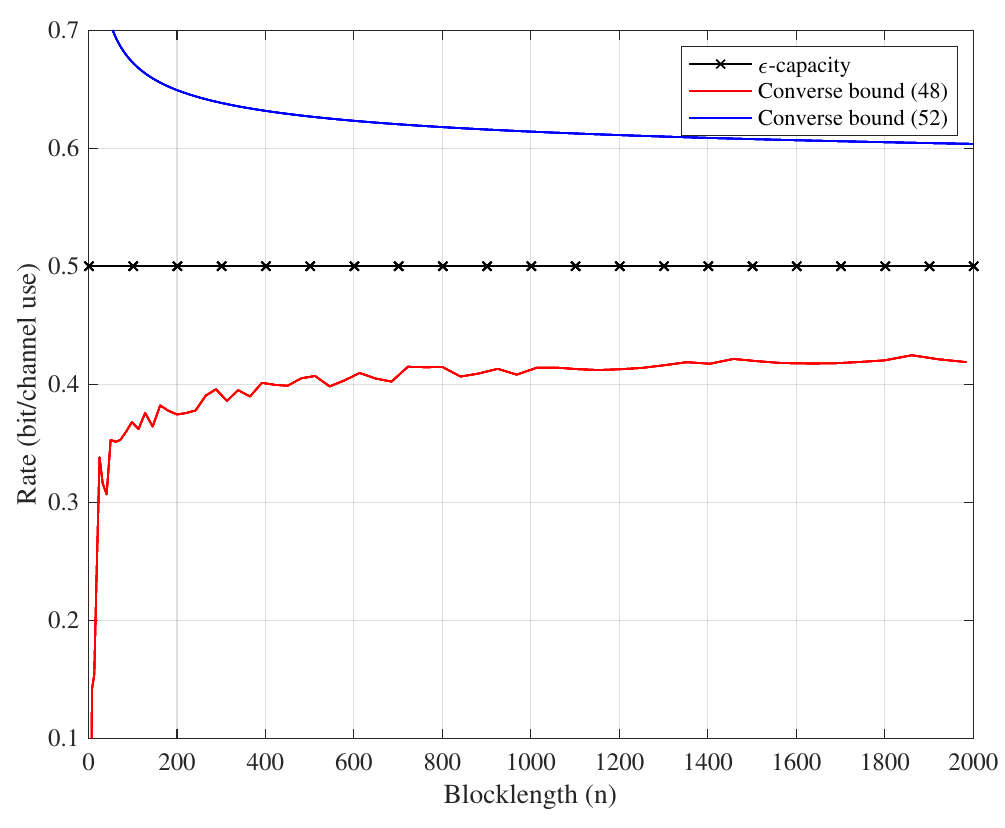}
					\captionsetup{font=footnotesize}
					\caption{\ Comparison of new and previous converse bound for the BSC permutation channel with crossover probability $\delta = 0.11$ and average error probability $\epsilon = 10^{-3}$. }
					\label{fig:comparison_with_pre}
	\end{figure}

\section{Conclusion}\label{Section 6}
		
		In summary, we generalize the minimax meta-converse to derive the converse bound for the noisy permutation channel. The key to the proof is that we adopt the ideas of symbol relaxation. Based on the divergence covering, we observe the second-order asymptotics and the strong converse. Additionally, we also determine the explicit converse bound, normal approximation of converse bound, and the $\epsilon$-capacity of the BSC permutation channel. We find a unique property of the BSC permutation channel: at fixed finite blocklength $n$, a smaller BSC crossover probability has a higher upper bound. Numerical evaluations indicate that the approximation can replace the complex computations of the converse bound, and our new bound is stronger than converse bound in \cite{makur_coding_2020}.
		
		Finally, we propose some future research directions. Firstly, \cite{Jennifer_2022} shows that the upper bound can be strengthened to a tighter form so that even in the case of $\mathsf{rank}(W) \neq |\mathcal{Y}|$, the upper bound matches the lower bound. Finding the upper bound at finite blocklengths satisfies this form is desirable. Secondly, a promising direction is to develop the achievability bound and its approximation for the noisy permutation channel. Lastly, future work may adapt our result to the noisy permutation channel with general DMC matrices, where the transition probability in $W$ can be equal to $0$ instead of greater than $0$.

\appendices
	
\section{Proof of Proposition \ref{proposition:meta_converse_of_permutation channel}}	\label{Appendix A}
		To prove the Proposition \ref{proposition:meta_converse_of_permutation channel}, we first need the following Lemma\footnote{This property was proposed by Makur \cite{makur_coding_2020}. }.
	\begin{lemma}\label{lemma:permutation_do_not_change_distribution}
			Fixed $\pi \in \Theta$, generate codeword $Z^n \stackrel{i.i.d.}{\sim} \pi$. After $Z^n$ undergoes random independent permutation, we have $X^n \stackrel {i.i.d.}{\sim} \pi$.
		\end{lemma}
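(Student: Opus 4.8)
The plan is to exploit the elementary fact that an i.i.d.\ sequence is exchangeable, so that composing it with a permutation drawn independently of the sequence leaves its law unchanged. Concretely, I would first unwind the definition of the permutation block: $\lambda$ is drawn uniformly from $\mathcal{S}_n$, independently of $Z^n$, and $X_{\lambda(i)} = Z_i$ for all $i \in \{1,\dots,n\}$, which is the same as $X_j = Z_{\lambda^{-1}(j)}$. Conditioning on the event $\{\lambda = \sigma\}$ for a fixed $\sigma \in \mathcal{S}_n$, the conditional law of $X^n$ is the law of $\bigl(Z_{\sigma^{-1}(1)}, \dots, Z_{\sigma^{-1}(n)}\bigr)$.

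The key step is then to use that, since $Z^n \stackrel{i.i.d.}{\sim} \pi$, its joint pmf factorizes as $P_{Z^n}(z^n) = \prod_{i=1}^n \pi(z_i)$, an expression that is symmetric in its arguments. Hence for every fixed $\sigma \in \mathcal{S}_n$ and every $x^n \in \mathcal{X}^n$,
\begin{equation}
\mathbb{P}\bigl[X^n = x^n \mid \lambda = \sigma\bigr] = \prod_{i=1}^n \pi\bigl(x_{\sigma^{-1}(i)}\bigr) = \prod_{j=1}^n \pi(x_j),
\end{equation}
where the last equality is just a reindexing of the product; in particular the conditional law does not depend on $\sigma$. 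Removing the conditioning by the law of total probability, and using the independence of $\lambda$ and $Z^n$,
\begin{equation}
\mathbb{P}\bigl[X^n = x^n\bigr] = \sum_{\sigma \in \mathcal{S}_n} \frac{1}{n!}\, \mathbb{P}\bigl[X^n = x^n \mid \lambda = \sigma\bigr] = \prod_{j=1}^n \pi(x_j),
\end{equation}
which is exactly the assertion that $X^n \stackrel{i.i.d.}{\sim} \pi$.

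There is no substantive obstacle here: the only point requiring a little care is the bookkeeping between $\lambda$ and $\lambda^{-1}$ in the relation $X_{\lambda(i)} = Z_i$, and the observation that the uniform measure on $\mathcal{S}_n$ is what lets us pass from the conditional statement (which already holds for each fixed $\sigma$) to the unconditional one. Exchangeability of the product law $\pi^{\otimes n}$ does all the real work, so the argument is short; I would present it essentially in the three displays above.
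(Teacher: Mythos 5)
Your proof is correct. The paper itself does not prove this lemma: in Appendix~A it is stated with a footnote attributing the property to Makur's original work, and no argument is supplied. Your derivation --- condition on the realized permutation $\sigma$, observe that exchangeability of the product law $\pi^{\otimes n}$ makes the conditional law of $X^n$ independent of $\sigma$, then average over the uniform $\lambda$ using its independence from $Z^n$ --- is the natural and standard way to establish the claim, and is the one you would find in Makur's reference. One small bookkeeping remark: the intermediate expression $\prod_{i=1}^n \pi(x_{\sigma^{-1}(i)})$ in your first display is indeed equal to $\prod_j \pi(x_j)$ since it is just a permuted product, but the computation most directly yields either $\prod_{j=1}^n \pi(x_j)$ at once, by factoring $\mathbb{P}[Z_{\sigma^{-1}(1)}=x_1,\dots,Z_{\sigma^{-1}(n)}=x_n]$ via independence of the $Z$'s, or $\prod_{i=1}^n \pi(x_{\sigma(i)})$ after reindexing by the $Z$-coordinate $i=\sigma^{-1}(j)$. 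This minor slip in tracking $\sigma$ versus $\sigma^{-1}$ does not affect correctness since all three products coincide.
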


	{ \noindent\textit{Proof of Proposition \ref{proposition:meta_converse_of_permutation channel}.} 	We assume that $W$ is strictly positive, i.e., for any $x \in \mathcal{X}$, we have $W_x>0$. In the noisy permutation channel, since the random permutation, the order of symbols of codewords conveys no information, and the type $\mathcal{P}_n$ is the only statistical information. Therefore, we need to analyze codewords generated by different probability distributions. Consider a Markov chain $M \rightarrow \Theta \rightarrow Z^n \rightarrow X^n \rightarrow Y^n \rightarrow \hat{M}$, where $\Theta \subset \Delta_{|\mathcal{X}-1|}$ is the set of the input probability distribution. Randomly and independently generate codeword $Z^n$ according to $\pi$, i.e., $ Z^n \stackrel{i.i.d.}{\sim} \pi$. Then, sent codewords to the noisy permutation channel. We use the joint probability density $P_{M,\Theta ,Z^n,X^n,Y^n, \hat{M}}$ to describe this Markov chain. }

	Define random variable $V = 1 \{ M = \hat{M} \}$ and random test $P_{V|\Theta ,Z^n,X^n,Y^n}$, according to meta-converse in \cite[Theorem 26 ]{polyanskiy_channel_2010}, we have
	\begin{align}
		&\mathbb{E}_{P_{\Theta,Z^n,X^n} \times W^n } [V=1] \ge 1 - \epsilon, \\
		&\mathbb{E}_{P_{\Theta,Z^n,X^n} \times  Q_{Y^n}} [V=1] = \frac{1}{|\mathcal{M}|}.
	\end{align}
	Through the definition of $\beta$ function, we readily see that
	\begin{equation}\label{eq:message_and_beta}
		\frac{1}{|\mathcal{M}|} \ge \beta_{\alpha}(P_{\Theta, Z^n,X^n} W^n ,P_{\Theta,Z^n,X^n} Q_{Y^n}).
	\end{equation}
	Notice that
	\begin{align}
		&\beta_{\alpha}(P_{\Theta, Z^n,X^n}W^n , P_{\Theta,Z^n,X^n} Q_{Y^n})  \nonumber \\
		& \ \ \ = Q\left[ \log \frac{P_{\Theta, Z^n,X^n}W^n}{P_{\Theta,Z^n,X^n} Q_{Y^n}} \ge \log \gamma \right],
	\end{align}
	we can find at least one $\pi \in \Theta, \ x^n \in \mathcal{X}^n$ that satisfies (\ref{eq:meta_converse_proof_1})-(\ref{eq:meta_converse_proof_2}),
			\begin{figure*}[t]
							\normalsize
							\vspace*{4pt}
						   \begin{align}
						   			&Q\left[ \log \frac{P_{\Theta, Z^n,X^n}W^n}{P_{\Theta,Z^n,X^n} Q_{Y^n}} \ge \log \gamma \right] \label{eq:meta_converse_proof_1} \\
						   			& = \sum_{\pi \in \Theta} P_{\Theta}(\pi) \sum_{x^n \in \mathcal{X}^n} \prod_{i=1}^{n}\pi(x_i) \sum_{x^n \in \mathcal{X}^n} P_{X^n|Z^n}(x^n|z^n) \sum_{y^n \in \mathcal{Y}^n} Q_{Y^n}(y^n) \cdot 1 \left\{  \log \frac{W^n(y^n|x^n)}{Q_{Y^n}(y^n)} \ge \log \gamma \right\} \\
						   			& = \sum_{\pi \in \Theta} P_{\Theta}(\pi) \sum_{x^n \in \mathcal{X}^n} \prod_{i=1}^{n}\pi(x_i) \sum_{y^n \in \mathcal{Y}^n} Q_{Y^n}(y^n) \cdot 1 \left\{  \log \frac{W^n(y^n|x^n)}{Q_{Y^n}(y^n)} \ge \log \gamma \right\} \label{eq:meta_converse_proof_3}\\
						   			& \ge \inf \limits_{\pi \in \Theta} \sum_{x^n \in \mathcal{X}^n} \prod_{i=1}^{n}\pi(x_i) \sum_{y^n \in \mathcal{Y}^n} Q_{Y^n}(y^n) \cdot 1 \left\{  \log \frac{W^n(y^n|x^n)}{Q_{Y^n}(y^n)} \ge \log \gamma \right\} \\
						   			& \ge \inf \limits_{\pi \in \Theta} \inf \limits_{x^n \in \mathcal{X}^n} \sum_{y^n \in \mathcal{Y}^n} Q_{Y^n}(y^n) \cdot 1 \left\{  \log \frac{W^n(y^n|x^n)}{Q_{Y^n}(y^n)} \ge \log \gamma \right\} \\
						   			& = \inf \limits_{\pi \in \Theta} \inf \limits_{x^n \in \mathcal{X}^n} Q \left[  \log \frac{W_{x^n}}{Q_{Y^n}} \ge \log \gamma \right]\\
						   			& =  \inf \limits_{\pi \in \Theta} \inf \limits_{x^n \in \mathcal{X}^n} Q \left[  \log \frac{\prod_{i=1}^{n} W_{x_i}}{Q_{Y^n}} \ge \log \gamma \right] \label{eq:meta_converse_proof_2}
						   		\end{align}
						   		  \hrulefill
						\end{figure*}
		where (\ref{eq:meta_converse_proof_3}) follows from the lemma \ref{lemma:permutation_do_not_change_distribution}, and the last equation comes from the stationary memoryless property of DMC (\ref{eq:memroyless_property}). 
		
		Now, let $Q_{Y^n} = \hat{Q}_{Y^n}$, i.e., 
		\begin{equation}
		Q_{Y^n} (y^n) =   \frac{1}{E} \sum_{\mathcal{N}_e \in \mathcal{O}} \prod_{Q_x \in \mathcal{N}_e} \prod_{i=1}^{n\hat{P}_{x^n}(x)} Q_{x}(y_i).
		\end{equation}
	Notice that
	\begin{equation}
		\beta_\alpha(W_{x^n},Q_{Y^n}) = Q \left[  \log \frac{\prod_{i=1}^{n} W_{x_i}}{Q_{Y^n}} \ge \log \gamma \right],
	\end{equation}
	substitute these into (\ref{eq:message_and_beta}) we get 
	\begin{equation}
		\log M^*(n,\epsilon) \le \sup \limits_{\pi \in \Theta} \sup \limits_{x^n \in \mathcal{X}^n} - \log \beta_\alpha(W_{x^n},Q_{Y^n}).
	\end{equation}
	This concludes the proof.
	{\hfill $\square$}

\section{Proof of Lemma \ref{lemma:prob_change}}\label{Appendix B}	
\begin{proof}
	Since elements of $\mathcal{N}_e$ are selected from $\mathcal{N}_r$, we obtain that for any $x \in \mathcal{X}$, there exists a set $\mathcal{N}_e$ such that $Q_{x}^* \in \mathcal{N}_e$ as the divergence covering center of $W_x$. Then we have
					\begin{align}
						&\log \frac{\prod_{i=1}^{n} W(y_i|x_i)}{Q_{Y^n}(y^n)} \nonumber \\ 
						= & \log E + \log\left(   \frac{\prod_{i=1}^{n}W(y_i|x_i)}{\sum_{\mathcal{N}_e \in \mathcal{O}} \prod_{Q_x \in \mathcal{N}_e} \prod_{i=1}^{n\hat{P}_{x^n}(x)} Q_{x}(y_i)} \right) \nonumber\\
						\le & \log E + \log\left(   \frac{\prod_{i=1}^{n}W(y_i|x_i)}{ \prod_{Q_x \in \mathcal{N}_e} \prod_{i=1}^{n\hat{P}_{x^n}(x)} Q_{x}^*(y_i)} \right) \nonumber\\
						= & \log E + \log \left(\frac{\prod_{x \in \mathcal{X}} \prod_{i=1}^{n\hat{P}_{x^n}(x)}W(y_i|x)}{ \prod_{Q_x \in \mathcal{N}_e} \prod_{i=1}^{n\hat{P}_{x^n}(x)} Q_{x}^*(y_i) } \right)  \label{eq:prob_change_proof_1} \\
						= & \log E + \sum_{i=1}^{n}\log  \frac{W(y_i|x_i)}{ Q_{x_i}^*(y_i) } \label{eq:prob_change_proof_2},
					\end{align}
					where (\ref{eq:prob_change_proof_1}) follows from that there are $n\hat{P}_{x^n}(x)$ occurrences of $W_x$. (\ref{eq:prob_change_proof_2}) follows from that $x_i \in \mathcal{X}$. We get $\log \frac{\prod_{i=1}^{n} W(y_i|x_i)}{Q_{Y^n}(y^n)} \ge \log \gamma$ implies 
					\begin{equation}
						\log E + \sum_{i=1}^{n}\log  \frac{W(y_i|x_i)}{ Q_{x_i}^*(y_i) }  \ge \log \gamma.
					\end{equation}
					Then we have
					\begin{align}
						&\mathbb{P}  \left[  \log \frac{W_{x^n}}{Q_{Y^n}} \ge \log \gamma \right] \nonumber \\
						& = \sum_{y^n \in \mathcal{Y}^n} P^n(y^n) 
												\cdot 1\left\{ \log \frac{\prod_{i=1}^{n}W(y_i|x_i)}{Q_{Y^n}} \ge \log \gamma \right\}\nonumber \\
						& \le \sum_{y^n \in \mathcal{Y}^n} P^n(y^n) \cdot1\left\{ \log E + \sum_{i=1}^{n} \log \frac{W(y_i|x_i)}{Q_{x_i}^*(y_i)} \ge \log \gamma \right\}\nonumber \\
						& = \mathbb{P} \left[ \sum_{i=1}^{n} \log \frac{W_{x_i}}{Q_{x_i}^*} \ge \log \gamma -  \log E\right].
					\end{align}
					This completes the proof.
\end{proof}

\section{Proof of Lemma \ref{lemma:asymptotic_bound_of_beta}} \label{Appendix C}
\begin{proof}
	By (\ref{eq:strong_converse_of_NP_region}) we get the converse of the $\beta$ function
			\begin{align}\label{eq:asymptotic_bound_of_beta_1}
				&\beta_{\alpha}(W_{x^n} ,Q_{Y^n}) \nonumber\\
				& \ \ \ \ge \frac{1}{\gamma}\left( \alpha - P\left[ \log \frac{W_{x^n}}{Q_{Y^n}} \ge \log \gamma \right] \right).
			\end{align}
			Apply lemma \ref{lemma:prob_change}, we have 
							\begin{align}
								&  P \left[  \log \frac{W_{x^n}}{Q_{Y^n}} \ge \log \gamma \right] \nonumber \\
								& \ \ \ \le P \left[ \sum_{i=1}^{n} \log \frac{W_{x_i}}{Q_{x_i}^*}  + \log E \ge \log \gamma  \right].
							\end{align}	
				Substituting this into (\ref{eq:asymptotic_bound_of_beta_1}), we get
				\begin{align}\label{eq:asymptotic_bound_of_beta_2}
						&\beta_{\alpha}(W_{x^n} ,Q_{Y^n}) \ge   \nonumber\\
						& \ \ \  \frac{1}{\gamma}\left( \alpha - P \left[ \sum_{i=1}^{n} \log \frac{W_{x_i}}{Q_{x_i}^*} \ge \log \gamma - \log E \right] \right).
				\end{align}
				We observe that $\sum_{i=1}^{n} \log \frac{W_{x_i}}{Q_{x_i}^*}$ is the sum of independent random variables. 
				
				In the case of $V_n > 0$, Berry-Esseen bound \cite[Chapter XVI.5]{Feller_book} shows
				\begin{align}
					P\left[\sum_{i=1}^{n} \log \frac{W_{x_i}}{Q_{x_i}^*} - nD_n \ge \lambda \sqrt{nV_n} \right] 
					\le  \frac{6T_n}{\sqrt{nV_n^3}} +  \Phi(-\lambda). \nonumber
				\end{align}
				For large $n$, let $\Delta > 0$ such that
				\begin{equation}
					\Phi(-\lambda) = \alpha - \frac{6T_n}{\sqrt{nV_n^3}} - \frac{\Delta}{\sqrt{n}}
				\end{equation}
				is positive, let
				\begin{align}
					\log \gamma = & \log E + nD_n \nonumber\\
					& - \sqrt{nV_n} \Phi^{-1} \left( \alpha - \frac{6T_n}{\sqrt{nV_n^3}} - \frac{\Delta}{\sqrt{n}} \right).
				\end{align}
				By Berry-Esseen bounds we have
				\begin{align}
					&P \left[ \sum_{i=1}^{n} \log \frac{W_{x_i}}{Q_{x_i}^*} \ge \log \gamma - \log E \right] \le \alpha -\frac{\Delta }{\sqrt{n}}. 
				\end{align}
				Substituting this into (\ref{eq:asymptotic_bound_of_beta_2}), we get
				\begin{align}
					&\log \beta_{\alpha}(W_{x^n} , Q_{Y^n}) \nonumber\\
					&\ \ \ \ge -\log E - nD_n + \log \frac{\Delta}{\sqrt{n}} \nonumber\\
					& \ \ \ \ \ \ + \sqrt{nV_n} \Phi^{-1}\left( \alpha - \frac{6T_n}{\sqrt{nV_n^3}} - \frac{\Delta}{\sqrt{n}}\right). 
				\end{align}
				
				In any case, we let
				\begin{equation}
					\log \gamma = \log E + nD_n + \sqrt{\frac{2nV_n}{\alpha}},  
				\end{equation}
				by Chebyshev inequality, we have 
				\begin{equation}
					P\left[  \sum_{i=1}^{n} \log \frac{W_{x_i}}{Q_{x_i}^*} \ge \log \gamma - \log E \right] \le \frac{\alpha}{2}.
				\end{equation}
				Again by (\ref{eq:asymptotic_bound_of_beta_2}) we get 
				\begin{align}
					\log \beta_{\alpha}(W_{x^n} , Q_{Y^n}) & \ge -\log E - nD_n \nonumber\\ 
					& \ \ \ - \sqrt{\frac{2nV_n}{1 - \epsilon}} + \log \frac{\alpha}{2}. 
				\end{align}		
\end{proof}

\section{Proof of Lemma \ref{lemma:properities_Vn_Tn}} \label{Appendix D}

\begin{proof}
			We use (\ref{eq:construst_covering_center_1})-(\ref{eq:construst_covering_center_3}) to construct a grid-like structure. Then, there exists a constant $\tau$ such that for $r_0 = \frac{1}{n\tau}$, the $\Lambda_\ell\left(\frac{1}{n\tau}\right)$ is the set of covering centers of $\Delta_{ \ell}$, and the covering radius is $1/n$. Let $\mathcal{N}_r = \Lambda_\ell\left(\frac{1}{n\tau}\right)$, we first consider the $2$-dimensional case. Let $W(y|x) = \delta \le 1/2$ and $r_0 (i-1)^2 \le \delta \le r_0 i^2$. 
			
			\textit{Case(a)}. $W(y|x) \ge Q_{x}^*(y)$: Let $Q_{x}^*(y) = r_0 (i-1)^2 $, note that $\sqrt{\delta \tau n} \le i \le \sqrt{\delta \tau n} + 1$, we have
			\begin{align}
				\left| \log \frac{W(y|x)}{Q_{x}^*(y)} \right| 
				& \le \left( \frac{W(y|x)}{Q_{x}^*(y) }-1 \right) \log e  \nonumber \\
				& \le \frac{r_0i^2 - r_0(i-1)^2}{r_0 (i-1)^2}  \log e \nonumber \\
				& \le \frac{2 - 1/\sqrt{\delta \tau n}}{\sqrt{\delta \tau} \sqrt{n} - 2 + 1/\sqrt{\delta \tau n}} \log e.
			\end{align}
			Then, for $n$ sufficiently large, we can find a constant $F_y$ only depend on $y \in \mathcal{Y}$ such that
			\begin{align}
				\left| \log \frac{W(y|x)}{Q_{x}^*(y)} \right| \le \frac{F_y}{\sqrt{n}}
			\end{align}
			
			\textit{Case(b)}. $Q_{x}^*(y) \ge W(y|x)$: Let $Q_{x}^*(y) = r_0 i^2 $, note that $W(y|x) \ge r_0(i-1)^2$, we can apply the same argument in \textit{Case(a)} to show
			\begin{align}
				\left| \log \frac{W(y|x)}{Q_{x}^*(y)} \right|  
				& \le \left( \frac{Q_{x}^*(y)}{W(y|x)} -1 \right) \log e \nonumber \\
				& \le \frac{r_0i^2 - r_0(i-1)^2}{r_0 (i-1)^2} \log e  \nonumber \\
				& \le \frac{F_y}{\sqrt{n}}
			\end{align}
			
			If $\delta > 1/2$, by symmetric we can get the same result in \textit{Case(a)} and \textit{Case(b)}.
			
			In $|\mathcal{Y}|$-dimensional case, let $C(y)$ denote a constant only depend on $y \in \{ 1,...,|\mathcal{Y}| \}$. Then we have $  \frac{C(y)}{n} i^2 \le \delta \le \frac{C(y)}{n} (i+1)^2$ because each $k$-th dimension is constructed by $\Lambda_2\left( \frac{1}{n C_0(y)}\right)$, where $C_0(y)$ is a constant only depend on $y$. Following the same argument in the $2$-dimensional case, we get the conclusion.
			
			Note that $D(W_x\| Q_{x}^*)^2 \le \frac{1}{n^2}$, we get
			\begin{equation}
				V_n \le \frac{C_1}{n},
			\end{equation}
			where $C_1$ is a constant. Then, using the inequality $|a-b|^3 \le 4(|a|^3 - |b|^3)$ to obtain
			\begin{equation}
				T_n \le \frac{C_2}{n^{3/2}},
			\end{equation}
			where $C_2$ is a constant. This completes the proof.
\end{proof}
	
\section{Proof of Theorem \ref{thm:asymptotic_bound_of_permutation_channel} }  \label{Appendix E}

\begin{proof}
	Let the covering radius $r = 1/n$, apply (\ref{eq:covering_number_converse}), we obtain 
				\begin{equation}
					\log |\mathcal{N}_r| \le \frac{\ell}{2} \log n + \log C_1,
				\end{equation}
				where $\log C_1 = \frac{\ell}{2}( \log \ell + 2 \log (\sqrt{2\tau}+1) )$ and $\ell = |\mathcal{Y}|-1$. Note that $E \le |\mathcal{N}_r|/|\mathcal{X}| + 1$, for any $n \ge 1$, we can find a constant $C_2$ satisfies
				\begin{equation}\label{eq:upper_bound_of_E}
					\log E \le \frac{\ell}{2}\log n + \log C_1 + \log C_2 - \log |\mathcal{X}|				
				\end{equation}
							
					\textit{Case(a)}. $\epsilon \in \left(0,1/2\right]$. Note that we have
					\begin{align}
						V_n & = \frac{1}{n} \sum_{i=1}^{n}\mathbb{E}_{W_{x_i}}\left[\left(\log \frac{W_{x_i}}{Q_{x_i}^*} - D(W_{x_i}\|Q_{x_i}^*)\right)^2 \right] \nonumber \\
						& = \mathbb{E}_{\hat{P}_{x^n} W}\left[  \left( \log \frac{W}{Q_{X}^*} - D(W\|Q_{X}^*)\right)^2\right] \nonumber \\
						& = V(\hat{P}_{x^n},W).
					\end{align}
					Apply Lemma \ref{lemma:properities_Vn_Tn} we obtain that for any $\pi \in \Theta$, there exists a constant $F_0$ only depends on $\pi$ and $\hat{P}_{x^n}$ such that
					\begin{equation}
						|V(\hat{P}_{x^n},W) - V(\pi,W)| \le \frac{F_1}{n}.
					\end{equation}
					Then, apply Proposition \ref{proposition:meta_converse_of_permutation channel} and Lemma \ref{lemma:asymptotic_bound_of_beta} to get
					\begin{align} \label{eq:upper_bound_of_M}
						\log M 
						& \le  \log E + nD_n - \log \frac{\Delta}{\sqrt{n}} + O(1)\nonumber \\
						&  \ \ \ - \sqrt{nV(\pi,W)} \Phi^{-1} \left(  \alpha - \frac{T_n}{\sqrt{nV_n^3}} - \frac{\Delta}{\sqrt{n}}\right) .
					\end{align}
					Let
					\begin{equation}
						\Delta = 2^{-G_1 \log \log n} \times \sqrt{n},
					\end{equation}
					where $G_1$ is a constant. Then, we notice that $ \log \frac{\Delta}{\sqrt{n}} = -G_1\log \log n$ and $\frac{\Delta}{\sqrt{n}} = 2^{- G_1 \log \log n}$. For any $F_2\in \left(0,1\right)$, we can find $n \ge N_0$ such that $\frac{\Delta}{\sqrt{n}} \le F_2$. Again by Lemma \ref{lemma:properities_Vn_Tn}, for any $F_3 \in (0,1)$, we can find $n \ge N_1$ such that $6 T_n/\sqrt{nV_{n}^3}\le F_3$. Note that $ \sqrt{nV(\pi,W)} $ is a finite constant for any $n$, setting $n \ge \max \{ N_0,N_1 \}$ and applying Taylor's formula we have
					\begin{align}
						  \sqrt{nV(\pi,W)} \ & \Phi^{-1}  \left(  \alpha - \frac{6 T_n}{\sqrt{nV_{n}^3}} - \frac{\Delta}{\sqrt{n}}\right) \nonumber \\
						 &  \ \ \ge \sqrt{nV(\pi,W)} \Phi^{-1}(\alpha) - F_4 
					\end{align}
					for some finite constant $F_4$ and all $n \ge \max \{ N_0,N_1 \} $.
					Substituting these into (\ref{eq:upper_bound_of_M}), and note that $\Phi^{-1}(\alpha) = - \Phi^{-1}(\epsilon)$ we have
					\begin{align}
						\log M \le& \  \frac{ \ell}{2} \log n  + \sqrt{nV_{min}} \Phi^{-1}(\epsilon) \nonumber \\
						&+O(\log \log n).
					\end{align}
					
					\textit{Case(b)}. $\epsilon \in  \left( 1/2, 1 \right)$. Following the same argument in \textit{Case(a)} but maximizing $V(\pi,W)$ because $\Phi(\epsilon) \ge 0$, then we get 
					\begin{align}
						\log M \le  & \ \frac{ \ell}{2} \log n  + \sqrt{nV_{max}} \Phi^{-1}(\epsilon) \nonumber \\
						&+O(\log \log n).
					\end{align}
					This completes the proof.
\end{proof}

\begin{figure*}[t]
					\normalsize
					\vspace*{4pt}
						\begin{align} 
							  \Lambda \left( r_0 \right) =& \left\{ r_0 i^2 : i \in \mathbb{Z}, r_0 i^2< \frac{1}{2} \right\} \cup \left\{  1-r_0 i^2 : i \in \mathbb{Z}, 1-r_0 i^2 < \frac{1}{2} \right\} \cup \left\{ \frac{1}{2} \right\} \label{eq:construst_covering_center_1}\\
							  \Lambda_2\left( r_0 \right) =& \left\{  (q,1-q):q \in \Lambda \left( r_0 \right)\right\} \label{eq:construst_covering_center_2}\\
							  \Lambda_k(r_0) =& \bigcup \limits_{ q \in \Lambda( r/k)} \left\{ ((1-q)q_1^*,...,(1-q)q_{k-1}^*,q): (q_1^*,...,q_{k-1}^*) \in \Lambda_{k-1}\left( \frac{k-1}{k} r_0 \right) \right\} \label{eq:construst_covering_center_3}
						\end{align}
					\hrulefill
\end{figure*}
\section{Proof of Theorem \ref{thm:asymptotic_bound_of_BSC_permutation_channel}} \label{Appendix F}	
\begin{proof}
	We use (\ref{eq:construst_covering_center_1}) and (\ref{eq:construst_covering_center_2}) to construct a set of points $\Lambda_2(r_0)$. Theorem \ref{thm:meta_converse_of_BSC_permutation_channel} yields that we only need to focus on the point within $(\delta,1-\delta)$. To do this we define
			\begin{equation}
				\Delta_{1}^* = \{ (p,1-p) \in \mathbb{R}^2, \delta < p \le 1/2 \}.
			\end{equation}
			Then we need to cover the $\Delta_{1}^*$.
			Without loss of generality, let $\delta < 1/2$, we can set $ r_0 (i-1)^2 < \delta \le r_0 i^2 < 1/2$ for small $r_0$. Let $Q_1 = (r_0i^2,1-r_0i^2)$, $Q_2 = (r_0(i-1)^2,1-r_0(i-1)^2)$, from (\ref{eq:KL_is_upperbounded_by_Chi})  we obtain that
			\begin{equation}\label{BSC converse use 5}
				D(W_x\|Q_1) \le \frac{(\delta - r_0 i^2)^2}{r_0i^2 (1- r_0i^2) }
			\end{equation}
			and
			\begin{equation}\label{BSC converse use 6}
				D(W_x\|Q_2) \le \frac{(\delta - r_0 (i-1)^2)^2}{r_0(i-1)^2 (1- r_0(i-1)^2) }.
			\end{equation}
			
			\textit{Case(a).} When $\delta \ge r_0 (i-\sqrt{2}/2)^2$, let $Q_1$ be the covering center of $W_x$ and apply (\ref{BSC converse use 5}) to obtain 
			\begin{align}
				D(W_x \| Q_1) & \le \frac{(r_0(i-\sqrt{2}/2)^2 - r_0i^2)^2}{r_0i^2(1-r_0i^2)} \nonumber\\
																& \le \frac{r_0}{\delta} \times \frac{(-\sqrt{2}i +0.5)^2}{1/r_0 - i^2},
			\end{align}
			where the second inequality follows from the fact that $r_0i^2 > \delta$. 
			For $i < \sqrt{1/r_0}$, we can find a $c > 0$ such that
			\begin{equation} \label{BSC converse use7}
				D(W_x \| Q_1) \le \frac{r_0c}{\delta}.
			\end{equation}
			
			\textit{Case(b).} When $\delta < r_0 (i-\sqrt{2}/2)^2$, let $Q_2$ be the covering center of $W_x$ and apply (\ref{BSC converse use 6}) to obtain
			\begin{align} 
				D(W_x \| Q_2) & \le \frac{(r_0(i-\sqrt{2}/2)^2 - r_0(i-1)^2)^2}{r_0(i-1)^2 (1- r_0(i-1)^2)} \nonumber\\
				& = \frac{r_0^2((2-\sqrt{2})i-1/2)^2}{r_0(i-1)^2 (1- r_0(i-1)^2)} \nonumber \\
				& \le r_0 \times \frac{r_0^2((2-\sqrt{2})i-1/2)^2}{r_0(i-1)^2/2},
			\end{align}
			where the last inequality follows from the fact that $1- r_0(i-1)^2 > 1/2$. For $i \ge 2$, we have
			\begin{equation} \label{BSC converse use8}
				D(W_x \| Q_2) \le  r_0.
			\end{equation}
				
			Let $\tau = c/\delta$ and $ r_0 =  \frac{1}{\tau n}$, where we set $c > 1/2$. 
			Note that we can find $i$ such that
			\begin{equation}\label{BSC converse use2}
				\lambda_2 = \frac{1}{\tau n} (i-1)^2  \le \delta \le \frac{1}{\tau n} i^2 = \lambda_1.
			\end{equation}
			Therefore we have 
			\begin{equation}\label{BSC converse use1}
				\sqrt{\delta \tau n } \le i \le \sqrt{\delta \tau n } +1.
			\end{equation}
			Note that for $n$ sufficiently large, we have $i \ge 2$ and $i \le \sqrt{\tau n \delta} +1 \le \sqrt{1/r_0}$. According to (\ref{BSC converse use7}) and (\ref{BSC converse use8}) we get
			\begin{equation}
				 \min \limits_{ Q \in \Lambda_2(r_0)}D(W_x \| Q) \le \frac{1}{n}.
			\end{equation}
			For the point $P = (p,1-p)$, where $p > \delta$, by the same argument we get 
			\begin{equation}
				\min \limits_{ Q \in \Lambda_2(r_0)}D(P \| Q) \le \frac{1}{n}.
			\end{equation}
			That is, all points of $\Delta_{1}^*$ are covered by covering centers and the covering radius is $1/n$. 
			
			Then we need to upper bound $c$, for $r_0 = \frac{1}{\tau n}$ and $i \le \sqrt{\tau n \delta} +1 $, we have
			\begin{align}
				c &=  \frac{(-\sqrt{2}i +0.5)^2}{\frac{cn}{\delta} - i^2} \nonumber\\
				& \le \frac{(-\sqrt{2cn} + 0.5)^2}{cn(1-\delta)/\delta  - 2\sqrt{cn} - 1}.
			\end{align}		
			Because $\delta < 1/2$, we have $(1-\delta) /\delta > 1$. Then, for $n$ sufficiently large, we have
			\begin{equation}
				\frac{(-\sqrt{2cn} + 0.5)^2}{cn(1-\delta)/\delta  - 2\sqrt{cn} - 1} \le 2.
			\end{equation}
			Thus, we set $c=2$.
			
			Now, we compute the variance. Note that for any $\delta > r_0 (i-\sqrt{2}/2)^2$, $ Q_1 = (\lambda_1,1-\lambda_1)$ is always the divergence covering center of $W_x$. Thus we can let
			\begin{align} \label{BSC converse use4}
				\lambda_1 - \delta = \frac{1}{2} \times \frac{1}{\tau n} \left[i^2 - (i-1)^2\right].
			\end{align}
			Substituting this into (\ref{BSC converse use1}) we get
			\begin{equation}
				\delta +  \sqrt{\frac{\delta}{\tau n}} - \frac{1}{2\tau n} \le \lambda_1 \le \delta + \sqrt{\frac{\delta}{\tau n}} + \frac{1}{2\tau n}.
			\end{equation}
			Then the lower bound and upper bound of the variance can be computed as (\ref{eq:lower_bound_of_variance}) and (\ref{eq:upper_bound_of_variance}). 
			\begin{figure*}[t]
						\normalsize
						\vspace*{4pt}
						\begin{align}\label{eq:lower_bound_of_variance}
								V_{min}(\delta) & =
								\delta \log ^2 \frac{\delta}{\delta + \sqrt{\frac{\delta}{\tau n}} - \frac{1}{2\tau n}} 
								+ (1- \delta) \log^2 \frac{1-\delta}{1 - \delta - \sqrt{\frac{\delta}{\tau n}} + \frac{1}{2\tau n}} \nonumber\\
								& \ \ \ - \left( \delta \log  \frac{\delta}{\delta + \sqrt{\frac{\delta}{\tau n}} + \frac{1}{2\tau n}} 
								+ (1- \delta) \log \frac{1-\delta}{1 - \delta -\sqrt{ \frac{\delta}{\tau n}} - \frac{1}{2\tau n}}\right)^2
						\end{align}
						\begin{align}\label{eq:upper_bound_of_variance}
								V_{max}(\delta) & =
								\delta \log ^2 \frac{\delta}{\delta + \sqrt{\frac{\delta}{\tau n}} + \frac{1}{2\tau n}} 
								+ (1- \delta) \log^2 \frac{1-\delta}{1 - \delta - \sqrt{\frac{\delta}{\tau n}} - \frac{1}{2\tau n}} \nonumber\\
								&\ \ \ - \left( \delta \log  \frac{\delta}{\delta + \sqrt{\frac{\delta}{\tau n}} - \frac{1}{2\tau n}} 
								+ (1- \delta) \log \frac{1-\delta}{1 - \delta -\sqrt{ \frac{\delta}{\tau n}} + \frac{1}{2\tau n}}\right)^2
						\end{align}
						\hrulefill
			\end{figure*}		
			
			Since Theorem \ref{thm:asymptotic_bound_of_permutation_channel} holds for the set of covering centers which constructed in any way, for $\epsilon \in (0,1/2)$ we have
			\begin{align}
				\log M \le & \ \frac{1}{2} \log n + \sqrt{nV_{min}(\delta)} \Phi^{-1}(\epsilon) \nonumber \\ 
				&+O(\log \log n), 
			\end{align}
			and for $\epsilon \in \left[1/2,1\right)$ we have
			\begin{align}
				\log M \le & \ \frac{1}{2} \log n + \sqrt{nV_{max}(\delta)} \Phi^{-1}(\epsilon) \nonumber \\ 
				&+O(\log \log n).
			\end{align}
			This completes the proof of (\ref{eq:asymptotic_bound_of_BSC_permutation_channel_1}) and (\ref{eq:asymptotic_bound_of_BSC_permutation_channel_2}).
\end{proof}

\printbibliography[heading=bibintoc, title={References}]

@article{makur_coding_2020,
	title = {Coding Theorems for Noisy Permutation Channels},
	volume = {66},
	issn = {0018-9448, 1557-9654},
	url = {https://ieeexplore.ieee.org/document/9142264/},
	doi = {10.1109/TIT.2020.3009468},
	pages = {6723--6748},
	number = {11},
	journaltitle = {{IEEE} Trans. Inf. Theory},
	author = {Makur, Anuran},
	urldate = {2023-09-25},
	date = {2020-11},
	langid = {english},
}

@article{tang_capacity_2023,
	title = {Capacity of Noisy Permutation Channels},
	volume = {69},
	issn = {0018-9448, 1557-9654},
	url = {https://ieeexplore.ieee.org/document/10050108/},
	doi = {10.1109/TIT.2023.3247812},
	pages = {4145--4162},
	number = {7},
	journaltitle = {{IEEE} Trans. Inf. Theory},
	author = {Tang, Jennifer and Polyanskiy, Yury},
	urldate = {2023-09-25},
	date = {2023-07},
	langid = {english},
}

@article{polyanskiy_channel_2010,
	title = {Channel Coding Rate in the Finite Blocklength Regime},
	volume = {56},
	issn = {0018-9448, 1557-9654},
	url = {http://ieeexplore.ieee.org/document/5452208/},
	doi = {10.1109/TIT.2010.2043769},
	pages = {2307--2359},
	number = {5},
	journaltitle = {{IEEE} Trans. Inf. Theory},
	author = {Polyanskiy, Yury and Poor, H. Vincent and Verdu, Sergio},
	urldate = {2023-08-11},
	date = {2010-05},
	langid = {english},
}

@article{polyanskiy_saddle_2013,
	title = {Saddle Point in the Minimax Converse for Channel Coding},
	volume = {59},
	issn = {0018-9448, 1557-9654},
	url = {http://ieeexplore.ieee.org/document/6395254/},
	doi = {10.1109/TIT.2012.2236382},
	pages = {2576--2595},
	number = {5},
	journaltitle = {{IEEE} Trans. Inf. Theory},
	author = {Polyanskiy, Yury},
	urldate = {2023-12-06},
	date = {2013-05},
	langid = {english},
}

@article{yang_information-theoretic_1999,
	title = {Information-theoretic determination of minimax rates of convergence},
	volume = {27},
	issn = {0090-5364},
	url = {https://projecteuclid.org/journals/annals-of-statistics/volume-27/issue-5/Information-theoretic-determination-of-minimax-rates-of-convergence/10.1214/aos/1017939142.full},
	doi = {10.1214/aos/1017939142},
	number = {5},
	journaltitle = {Ann. Statist.},
	author = {Yang, Yuhong and Barron, Andrew},
	urldate = {2023-12-21},
	date = {1999-10-01},
	langid = {english},
}

@article{kovacevic_subset_2013,
	title = {Subset Codes for Packet Networks},
	volume = {17},
	issn = {1089-7798},
	url = {http://ieeexplore.ieee.org/document/6476937/},
	doi = {10.1109/LCOMM.2013.022713.122397},
	pages = {729--732},
	number = {4},
	journaltitle = {{IEEE} Commun. Lett.},
	author = {Kovacevic, Mladen and Vukobratovic, Dejan},
	urldate = {2024-01-10},
	date = {2013-04},
	langid = {english},
	
}

@inproceedings{heckel_fundamental_2017,
	location = {Aachen, Germany},
	title = {Fundamental limits of {DNA} storage systems},
	isbn = {978-1-5090-4096-4},
	url = {http://ieeexplore.ieee.org/document/8007106/},
	doi = {10.1109/ISIT.2017.8007106},
	eventtitle = {Proc. IEEE Int. Symp. Inf. Theory (ISIT)},
	pages = {3130--3134},
	booktitle = {Proc. IEEE Int. Symp. Inf. Theory (ISIT)},
	author = {Heckel, Reinhard and Shomorony, Ilan and Ramchandran, Kannan and Tse, David N. C.},
	urldate = {2024-01-10},
	date = {2017-06},
	langid = {english},
}

@article{kovacevic_codes_2018,
	title = {Codes in the Space of Multisets—Coding for Permutation Channels With Impairments},
	volume = {64},
	issn = {0018-9448, 1557-9654},
	url = {https://ieeexplore.ieee.org/document/8245841/},
	doi = {10.1109/TIT.2017.2789292},
	pages = {5156--5169},
	number = {7},
	journaltitle = {{IEEE} Trans. Inf. Theory},
	author = {Kovacevic, Mladen and Tan, Vincent Y. F.},
	urldate = {2024-01-12},
	date = {2018-07},
	langid = {english},
}

@collection{shiryayev_selected_1993,
	location = {Dordrecht},
	title = {Selected Works of A. N. Kolmogorov},
	volume = {27},
	isbn = {978-90-481-8456-9 978-94-017-2973-4},
	url = {http://link.springer.com/10.1007/978-94-017-2973-4},
	series = {Mathematics and Its Applications},
	publisher = {Springer Netherlands},
	editor = {Shiryayev, A. N.},
	editorb = {Hazewinkel, M.},
	editorbtype = {redactor},
	urldate = {2024-01-21},
	date = {1993},
	langid = {english},
	doi = {10.1007/978-94-017-2973-4},
}

@article{wolfowitz_notes_1968,
	title = {Notes on a general strong converse},
	volume = {12},
	issn = {00199958},
	url = {https://linkinghub.elsevier.com/retrieve/pii/S0019995868901125},
	doi = {10.1016/S0019-9958(68)90112-5},
	pages = {1--4},
	number = {1},
	journaltitle = {Information and Control},
	author = {Wolfowitz, J.},
	urldate = {2024-01-24},
	date = {1968-01},
	langid = {english},
}

@article{lu_permutation_2024,
	title = {Permutation Capacity Region of Adder Multiple-Access Channels},
	issn = {0018-9448, 1557-9654},
	url = {https://ieeexplore.ieee.org/document/10381856/},
	doi = {10.1109/TIT.2024.3350436},
	pages = {1--1},
	journaltitle = {{IEEE} Trans. Inf. Theory},
	author = {Lu, William and Makur, Anuran},
	urldate = {2024-02-21},
	date = {2024},
	langid = {english},
}

@article{hayashi_information_2009,
	title = {Information Spectrum Approach to Second-Order Coding Rate in Channel Coding},
	volume = {55},
	issn = {0018-9448, 1557-9654},
	url = {http://ieeexplore.ieee.org/document/5290292/},
	doi = {10.1109/TIT.2009.2030478},
	pages = {4947--4966},
	number = {11},
	journaltitle = {{IEEE} Trans. Inf. Theory},
	author = {Hayashi, Masahito},
	urldate = {2024-02-22},
	date = {2009-11},
	langid = {english},
}

@article{tomamichel_tight_2013,
	title = {A Tight Upper Bound for the Third-Order Asymptotics for Most Discrete Memoryless Channels},
	volume = {59},
	issn = {0018-9448, 1557-9654},
	url = {http://ieeexplore.ieee.org/document/6572802/},
	doi = {10.1109/TIT.2013.2276077},
	pages = {7041--7051},
	number = {11},
	journaltitle = {{IEEE} Trans. Inf. Theory},
	author = {Tomamichel, Marco and Tan, Vincent Y. F.},
	urldate = {2024-02-24},
	date = {2013-11},
	langid = {english},
}

@inproceedings{lu_permutation_2023,
	location = {Taipei, Taiwan},
	title = {Permutation Sum-Capacity of Binary Adder Multiple-Access Channels},
	isbn = {978-1-66547-554-9},
	url = {https://ieeexplore.ieee.org/document/10206781/},
	doi = {10.1109/ISIT54713.2023.10206781},
	eventtitle = {Proc. IEEE Int. Symp. Inf. Theory (ISIT)},
	pages = {933--938},
	booktitle = {Proc. IEEE Int. Symp. Inf. Theory (ISIT)},
	author = {Lu, William and Makur, Anuran},
	urldate = {2024-03-21},
	date = {2023-06-25},
	langid = {english},
}

@article{yazdi_dna-based_2015,
	title = {{DNA}-Based Storage: Trends and Methods},
	volume = {1},
	issn = {2372-2061, 2332-7804},
	url = {http://ieeexplore.ieee.org/document/7423733/},
	doi = {10.1109/TMBMC.2016.2537305},
	shorttitle = {{DNA}-Based Storage},
	pages = {230--248},
	number = {3},
	journaltitle = {{IEEE} Trans. Mol. Biol. Multi-Scale Commun.},
	author = {Yazdi, S. M. Hossein Tabatabaei and Kiah, Han Mao and Garcia-Ruiz, Eva and Ma, Jian and Zhao, Huimin and Milenkovic, Olgica},
	urldate = {2024-03-26},
	date = {2015-09},
	langid = {english},
}

@inproceedings{shomorony_capacity_2019,
	location = {Paris, France},
	title = {Capacity Results for the Noisy Shuffling Channel},
	isbn = {978-1-5386-9291-2},
	url = {https://ieeexplore.ieee.org/document/8849789/},
	doi = {10.1109/ISIT.2019.8849789},
	eventtitle = {Proc. IEEE Int. Symp. Inf. Theory (ISIT)},
	pages = {762--766},
	booktitle = {Proc. IEEE Int. Symp. Inf. Theory (ISIT)},
	author = {Shomorony, Ilan and Heckel, Reinhard},
	urldate = {2024-03-26},
	date = {2019-07},
	langid = {english},
}

@inproceedings{walsh_optimal_2008,
	location = {Toronto, {ON}, Canada},
	title = {Optimal rate delay tradeoffs for multipath routed and network coded networks},
	isbn = {978-1-4244-2256-2},
	url = {http://ieeexplore.ieee.org/document/4595073/},
	doi = {10.1109/ISIT.2008.4595073},
	eventtitle = {Proc. IEEE Int. Symp. Inf. Theory (ISIT)},
	pages = {682--686},
	booktitle = {Proc. IEEE Int. Symp. Inf. Theory (ISIT)},
	author = {Walsh, John {MacLaren} and Weber, Steven and Maina, Ciira Wa},
	urldate = {2024-03-26},
	date = {2008-07},
	langid = {english},
}

@article{maclaren_walsh_optimal_2009,
	title = {Optimal Rate–Delay Tradeoffs and Delay Mitigating Codes for Multipath Routed and Network Coded Networks},
	volume = {55},
	issn = {0018-9448, 1557-9654},
	url = {http://ieeexplore.ieee.org/document/5319759/},
	doi = {10.1109/TIT.2009.2032851},
	pages = {5491--5510},
	number = {12},
	journaltitle = {{IEEE} Trans. Inf. Theory},
	author = {{MacLaren} Walsh, John and Weber, Steven and Wa Maina, Ciira},
	urldate = {2024-03-26},
	date = {2009-12},
	langid = {english},
}

@inproceedings{makur_bounds_2020,
	location = {Los Angeles, {CA}, {USA}},
	title = {Bounds on Permutation Channel Capacity},
	rights = {https://ieeexplore.ieee.org/Xplorehelp/downloads/license-information/{IEEE}.html},
	isbn = {978-1-72816-432-8},
	url = {https://ieeexplore.ieee.org/document/9174455/},
	doi = {10.1109/ISIT44484.2020.9174455},
	eventtitle = {Proc. IEEE Int. Symp. Inf. Theory (ISIT)},
	pages = {2026--2031},
	booktitle = {Proc. IEEE Int. Symp. Inf. Theory (ISIT)},
	author = {Makur, Anuran},
	urldate = {2024-04-11},
	date = {2020-06},
	langid = {english},
}

@inproceedings{wang_simple_2009,
	location = {Seoul, South Korea},
	title = {Simple channel coding bounds},
	isbn = {978-1-4244-4312-3},
	url = {http://ieeexplore.ieee.org/document/5205312/},
	doi = {10.1109/ISIT.2009.5205312},
	eventtitle = {Proc. IEEE Int. Symp. Inf. Theory (ISIT)},
	pages = {1804--1808},
	booktitle = {Proc. IEEE Int. Symp. Inf. Theory (ISIT)},
	author = {Wang, Ligong and Colbeck, Roger and Renner, Renato},
	urldate = {2024-04-15},
	date = {2009-06},
	langid = {english},
}

@inproceedings{strassen_asymptotic_nodate,
	title = {Asymptotic Estimates in Shannon’s Information Theory},
	author = {Strassen, V},
	pages = {689-723},
	booktitle = {Trans. 3rd Prague Conf. Inf. Theory},
	date = {1962},
}

@article{kovacevic_perfect_2015,
	title = {Perfect codes in the discrete simplex},
	volume = {75},
	rights = {http://www.springer.com/tdm},
	issn = {0925-1022, 1573-7586},
	url = {http://link.springer.com/10.1007/s10623-013-9893-5},
	doi = {10.1007/s10623-013-9893-5},
	pages = {81--95},
	number = {1},
	journaltitle = {Des. Codes Cryptogr.},
	author = {Kovačević, Mladen and Vukobratović, Dejan},
	urldate = {2024-05-05},
	date = {2015-04},
	langid = {english},
}

@article{csiszar_context_2006,
	title = {Context tree estimation for not necessarily finite memory processes, via {BIC} and {MDL}},
	volume = {52},
	rights = {https://ieeexplore.ieee.org/Xplorehelp/downloads/license-information/{IEEE}.html},
	issn = {0018-9448},
	url = {http://ieeexplore.ieee.org/document/1603768/},
	doi = {10.1109/TIT.2005.864431},
	pages = {1007--1016},
	number = {3},
	journaltitle = {{IEEE} Trans. Inf. Theory},
	author = {Csiszar, I. and Talata, Z.},
	urldate = {2024-05-07},
	date = {2006-03},
	langid = {english},
}

@book{polyanskiy_2023_book,
	address = {Cambridge, U.K},
	title = {Information theory: From coding to learning},
	publisher = {Cambridge University Press},
	author = {Polyanskiy, Yury and Wu, Yihong},
	year = {2023},
}

@book{Feller_book,
	address = {New York: Wiley},
	edition = { 2nd ed},
	title = {An Introduction to Probability Theory and Its Applications},
	author = {Feller, William},
	year = {1971, Vol. II}
}

@phdthesis{Jennifer_2022,
	title = {Divergence Covering},
	author = {Jennifer Tang},
	school = {Massachusetts Institute of Technology.},
	location = {Cambridge, MA, USA},
	year = {2021},
}

@phdthesis{polyanskiy_2010_thesis,
	title = {Channel coding: non-asymptotic fundamental limits},
	author = {Yury Polyanskiy},
	school = {Princeton Univ.},
	location = {Princeton, NJ, USA},
	year = {2010},
}

@article{topsoe1967information,
  title={An information theoretical identity and a problem involving capacity},
  author={Tops{\o}e, Flemming},
  journal={Stud. Sci. Math. Hung.},
  volume={2},
  number={3},
  pages={291--292},
  year={1967}
}
		
		\nocite{shiryayev_selected_1993}
		\nocite{yang_information-theoretic_1999}

		\newpage

\end{document}